\documentclass[11pt]{article}
\usepackage[margin=1in]{geometry}
\usepackage{amsmath,amssymb,amsthm,booktabs,array}
\usepackage{setspace,placeins}
\usepackage[round]{natbib}
\usepackage{xurl}
\usepackage[hidelinks]{hyperref}
\usepackage{pgfplots}
\pgfplotsset{compat=1.18}
\usepackage{siunitx}
\newcommand{\rz}[1]{\num[round-mode=places,round-precision=0]{#1}}
\hypersetup{pdftitle={Safety Research and the Allocation of Frontier Compute},pdfauthor={Li Gan}}
\newtheorem{proposition}{Proposition}
\newtheorem{corollary}{Corollary}
\newtheorem{lemma}{Lemma}
\newcommand{\alUnit}{72.125}
\newcommand{\alBoundaryS}{14.77}
\newcommand{\alBoundaryRisk}{57.68}
\newcommand{\alHx}{0.819}
\newcommand{\alHs}{-3.285}
\newcommand{\alKnowledgeTerm}{1.002}
\newcommand{\alExposureTerm}{0.183}

\newcommand{\alHazardAdvantage}{25}
\newcommand{\alCostCustomers}{39.33}
\newcommand{\alCostCustomersPoint}{0.39}
\newcommand{\alCostCustomersPointPct}{0.55}
\newcommand{\alBudgetFiveIndex}{3.61}
\newcommand{\alQuality}{29.66}

\newcommand{\alCostResearchLow}{37.05}
\newcommand{\alCostResearchHigh}{66.13}
\newcommand{\alPerCostResearchLow}{0.111}
\newcommand{\alPerCostResearchHigh}{0.062}
\newcommand{\alPerCostCustomers}{0.084}
\newcommand{\alThreshold}{63.5}
\newcommand{\alDelayPerPoint}{0.11}

\newcommand{\alRedeployedShare}{95}

\newcommand{\alSatConstX}{40.7}
\newcommand{\alSatConstS}{14.3}

\newcommand{\alSatConstRisk}{54.83}

\newcommand{\alSatConstGain}{2.85}

\newcommand{\alSatStageSwitch}{0.2}

\newcommand{\alSatStageLateS}{7.8}

\newcommand{\alResStageRisk}{38.60}
\newcommand{\alResStageSwitch}{0.5}

\newcommand{\alResStageEarlyS}{34.0}

\newcommand{\alResStageLateI}{93.0}

\newcommand{\alResStageEnd}{27.60}

\newcommand{\alResStageGainBoundary}{19.08}

\newcommand{\alResEarlyLoss}{9.45}
\newcommand{\alResRepay}{4.45}
\newcommand{\alResFlowRatio}{1.44}
\newcommand{\alResCrossF}{0.68}
\newcommand{\alRelaxedRisk}{35.89}

\newcommand{\alRelaxedDelay}{12.0}
\newcommand{\alKappaPct}{5.5}
\newcommand{\alMuHalf}{1.15}
\newcommand{\alSqrtTen}{3.16}
\newcommand{\alRiseLog}{1.10}
\newcommand{\alRiseTransferCoef}{1.28}

\newcommand{\alThresholdOne}{65.8}
\newcommand{\alThresholdLimit}{66.2}
\newcommand{\alThresholdRise}{2.7}

\newcommand{\alAfterShareOne}{82}
\newcommand{\alTakeoffRatio}{1.25}
\newcommand{\alKnowledgeRatio}{1.29}
\newcommand{\alKnowledgeResearch}{5.82}
\newcommand{\alKnowledgeCustomer}{4.51}
\newcommand{\alPremiumThreshold}{34}
\newcommand{\alPremiumRedeployed}{64}
\newcommand{\alPremiumSaturated}{13}
\newcommand{\alSweepSixty}{48.3}
\newcommand{\alSweepSwitchZero}{65}
\newcommand{\alSweepSwitchOne}{70}
\newcommand{\alCapNoCut}{85}
\newcommand{\alCapStageGainMin}{3.77}
\newcommand{\alCapStageGainMax}{6.66}
\newcommand{\alFullRisk}{75.00}
\newcommand{\alCFStageRisk}{52.14}
\newcommand{\alCFSixRisk}{51.14}
\newcommand{\alCFTwelveRisk}{51.01}
\newcommand{\alCFTimingGain}{6.67}

\newcommand{\alSatTwelveDelay}{0.49}
\newcommand{\alRedTwelveAdd}{0.002}
\newcommand{\alSatTwelveAdd}{0.91}
\newcommand{\alSatStageAdd}{1.08}
\newcommand{\alCFTwelveFloorF}{0.5}
\newcommand{\alSatTwelveFullFrom}{0.7}
\newcommand{\alRedTwelveCutFrom}{0.7}

\newcommand{\alRedTwelveAddFifteen}{2.03}

\newcommand{\alRedTwelveDelayFifteen}{1.04}

\newcommand{\alSatTwelveAddFifteen}{7.78}

\newcommand{\alRedConstAddFifteen}{0.07}
\newcommand{\alRedTwelveBackFifteen}{0.15}

\newcommand{\alRedTwelveAddThirty}{8.92}

\newcommand{\alSatTwelveAddThirty}{15.53}

\newcommand{\alThresholdFifteen}{106.5}

\newcommand{\alCrossBudget}{13.1}
\newcommand{\alBoundaryIFive}{35.2}
\newcommand{\alBoundaryIThirty}{4.3}
\newcommand{\alRedLastNoCut}{12.5}
\newcommand{\alRedFirstCut}{15}
\newcommand{\alRedXFifteen}{48.3}
\newcommand{\alRedXThirty}{37.7}
\newcommand{\alSatXTwoHalf}{45.0}
\newcommand{\alSatXThirty}{20.1}
\newcommand{\alThetaC}{0.244}
\newcommand{\alOneMinusThetaQ}{0.478}
\newcommand{\alRdI}{0.352}
\newcommand{\alRdJpre}{27.7}
\newcommand{\alRdSqrtI}{0.594}
\newcommand{\alRdOnePlusPi}{1.343}
\newcommand{\alRdPremRed}{0.64}
\newcommand{\alRdPremSat}{0.13}
\newcommand{\alRdThreshold}{0.635}
\newcommand{\alRdTermQ}{0.284}
\newcommand{\alRdTermC}{0.513}
\newcommand{\alRdSqrtThreshold}{0.797}
\newcommand{\alRhoLow}{1.13}
\newcommand{\alRhoHigh}{1.69}
\newcommand{\alCrossShare}{22.0}
\newcommand{\alWeakRhoLow}{0.81}
\newcommand{\alWeakRhoHigh}{0.92}
\newcommand{\alWeakThreshFive}{38.2}
\newcommand{\alWeakThreshThirty}{139.9}
\newcommand{\alYieldHxZero}{16.3}
\newcommand{\alYieldSatCross}{13.6}
\newcommand{\alThresholdYieldQuarter}{54.3}
\newcommand{\alThresholdYieldTenth}{38.2}
\newcommand{\alThresholdYieldDouble}{65.2}
\newcommand{\alHxTenth}{-0.345}

\newcommand{\alKnowledgeTenth}{0.555}
\newcommand{\alExposureTenth}{0.900}
\newcommand{\alYieldTenthSatX}{50.0}

\newcommand{\alMaxSectionDelay}{5.50}
\newcommand{\alMeshRiskChange}{0.06}
\newcommand{\alMeshGainChange}{0.06}
\newcommand{\alMeshDelayChange}{0.01}
\newcommand{\alExpandPremiumSix}{73}
\newcommand{\alExpandCrossSix}{14.9}
\newcommand{\alExpandPremiumTwelve}{84}
\newcommand{\alExpandCrossTwelve}{16.8}
\newcommand{\alExpandPremiumTwentyFour}{116}
\newcommand{\alExpandCrossTwentyFour}{21.0}

\newcommand{\alAbsorbXSixty}{49.1}
\newcommand{\alAbsorbGainSixty}{0.02}

\newcommand{\alAbsorbXSeventyTwo}{47.7}
\newcommand{\alAbsorbGainSeventyTwo}{0.12}
\newcommand{\alCriticalAbsorb}{53.5}
\newcommand{\alAbsorbXFifteenMin}{38.0}
\newcommand{\alAbsorbXFifteenMax}{43.4}
\newcommand{\alHzThrMin}{61.8}
\newcommand{\alHzThrMax}{65.3}
\newcommand{\alHzAdvMin}{22}
\newcommand{\alHzAdvMax}{28}
\newcommand{\alHzCrossMin}{12.4}
\newcommand{\alHzCrossMax}{13.8}
\newcommand{\alHzSatAddMin}{0.74}
\newcommand{\alHzSatAddMax}{1.25}
\newcommand{\alHzRedAddMax}{0.30}
\newcommand{\alHzTimingMin}{4.78}
\newcommand{\alHzTimingMax}{8.54}
\newcommand{\alHzSigmaMin}{0.060}
\newcommand{\alHzSigmaMax}{0.117}
\newcommand{\alHzCFMin}{43.16}
\newcommand{\alHzCFMax}{58.12}
\newcommand{\alJlong}{984.785}
\newcommand{\alRatio}{13.65}
\newcommand{\alEquivalentLong}{0.366}
\newcommand{\alSigma}{0.086116}
\newcommand{\alGridCount}{11804}
\newcommand{\alStateCount}{13}
\newcommand{\alMaxNoSpill}{0.28}

\newcommand{\alMaxNoSpillFive}{0.15}

\newcommand{\alRedRegret}{5.90}

\newcommand{\alRedYieldSignal}{5.90}
\newcommand{\alRedTailRisk}{99.25}
\newcommand{\alRedAdverseRisk}{98.06}

\newcommand{\alSatRuleX}{41.0}

\newcommand{\alSatRegret}{4.86}
\newcommand{\alSatBoundaryRegret}{8.92}

\newcommand{\alIntegrationGap}{0.0166}
\newcommand{\alStartsAgree}{6}
\newcommand{\alStageRefinement}{0.0004}
\newcommand{\alPathCheck}{2\times10^{-6}}
\newcommand{\alDerivativeCheck}{2\times10^{-7}}
\newcommand{\alSqrtGap}{3\times10^{-7}}
\newcommand{\alRedRuleXThirty}{36.5}
\newcommand{\alSegmentCheck}{6\times10^{-8}}
\newcommand{\alSegmentAgree}{2}
\newcommand{\alCurvThrQuarter}{178.1}
\newcommand{\alCurvRedXQuarter}{41.0}

\newcommand{\alCurvThrThreeQuarter}{43.9}

\newcommand{\alCurvSatXThreeQuarter}{50.0}

\title{Safety Research and the Allocation of Frontier Compute}
\author{Li Gan\thanks{Department of Economics, Texas A\&M University. Email: ganli@tamu.edu. AI software (Anthropic's Claude) was used under the author's direction to derive and check the algebra, write the calibration and simulation code, and edit the draft. All output was checked by the author, who is responsible for the content.}}
\date{October 2026}

\begin{document}
\maketitle
\begin{abstract}
\noindent Who should give up compute for AI safety research: customers or capability research? Safety paid for by customers delays no capability. Safety paid for by research slows the frontier, which gives knowledge more time to accumulate but postpones what a finished frontier provides. In a frontier-takeoff model, the cheaper source at a given allocation is set by a threshold in demand capacity. Demand capacity is the share of compute customers can use once the frontier is finished. Research should pay too only when demand capacity lies below the threshold. Where the threshold sits depends on how fast the value of customer services diminishes. In the calibration it rises with the service budget, because each further unit taken from customers costs them more. Safety paid for by customers should come early: its ratio to customer compute falls over the takeoff unless safety productivity rises fast. In a calibration with a tight service budget, early customer-funded safety lowers the probability of catastrophe from \rz{\alFullRisk} to \rz{\alCFTwelveRisk} percent. That is within one point of the best rule found that may also slow research.
\end{abstract}

\section{Introduction}

Proposals to pace the AI frontier move compute into safety research. They differ in who pays. Customers can give up inference compute, which leaves capability development unchanged. Capability research can give up compute instead, which keeps customer capacity but delays every later capability. Take a lab that puts 50 percent of its compute into capability research, 5 percent into safety, and 45 percent into customers. To double safety it needs five more points. Taking them from customers delays nothing; in the calibration below, each point taken from research delays completion by \alDelayPerPoint{} months. The AI Futures Project proposes minimum compute shares for customers and for transparent safety research, limits on the model used for research, and a monthly risk ceiling \citep{aifutures2026}. \citet{amodei2026} argues for spending the time that slower development buys on safety research. A safety floor funded by customers and a research slowdown are two slices of one allocation problem, and comparing the slices cannot say which source should pay.

This paper studies that allocation problem. A lab divides growing compute among three uses. Capability research advances a frontier, safety research builds a stock of knowledge that scales down later hazard, and customers value inference capacity and model quality. A regulator minimizes cumulative catastrophe hazard subject to a \emph{service budget}, a cap on the discounted services customers give up, and a deadline on every capability milestone.

At a given allocation, the answer is a threshold in \emph{demand capacity}, the share of installed compute customers can use once the frontier is finished. Both sources buy the same direct increase in safety; they differ in what else they do. Taking compute from research gives knowledge more time before each capability arrives, at the price of longer exposure at each level (Proposition~\ref{prop:funding}). It also delays everything a finished frontier provides, which costs customers the \emph{value of finishing}: what completing a month earlier adds to their services (Proposition~\ref{prop:source}). The paper assumes that a finished frontier gives customers all the compute they can use, in the reference and in every policy. After completion hazard no longer depends on how compute is split, so with the reference held fixed, choosing this continuation for any policy loses nothing (Lemma~\ref{lem:continuation}). Given that continuation, research is the cheaper source of safety compute exactly when demand capacity lies below a threshold (Corollary~\ref{cor:threshold}). With the baseline valuation, if customers can absorb the compute that capability research no longer needs, finishing is valuable and customers should pay. If they cannot, delay costs little and research should pay too. Where the threshold sits depends on how fast the value of customer services diminishes (Table~\ref{tab:curvature}). With the paper's square-root service value it is \alThreshold{} percent at a 5 percent budget. With service value $Q^{1/4}$ it is \alCurvThrQuarter{} percent, so research should pay even when customers can use all freed compute. With $Q^{3/4}$ it is \alCurvThrThreeQuarter{} percent, so customers should pay even when they cannot.

In the calibration, the threshold rises with the service budget (Proposition~\ref{prop:budget}). With diminishing returns to services, each further unit taken from customers costs more as their share shrinks, while the cost of slowing research stays bounded. So whenever slowing research cuts hazard at all, research eventually becomes the cheaper source as the budget grows, whatever customers can use; this needs no hazard advantage for research. A rise at every budget is not general. It requires the relative hazard benefit of slowing research not to fall too fast as safety grows.

Safety paid for by customers should come early (Proposition~\ref{prop:frontload}). Along an interior optimum, safety compute relative to customer compute falls over the takeoff unless safety productivity rises fast. Knowledge bought early protects more of the remaining hazard, and a unit of customer compute grows more valuable as compute and model quality grow.

A calibration to the AI Futures Project's worked example puts numbers on these results. Full speed carries a \rz{\alFullRisk} percent probability of catastrophe. At a service budget of 5 percent, the threshold is \alThreshold{} percent of installed compute. Customers who can use all freed compute, \alRedeployedShare{} percent, lie far above it, so customers should pay. Customers capped at the reference 45 percent lie below it, so research should pay too. The test matters most at larger budgets. Under redeployed compute, twelve-segment rules that cut research early gain \alRedTwelveAddFifteen{} points over the best customer-funded rule at a 15 percent budget and \alRedTwelveAddThirty{} at 30 percent. Customer-funded safety, front-loaded over twelve capability segments, lowers the probability of catastrophe to \alCFTwelveRisk{} percent, within \alSatTwelveAdd{} points of the best rule found that may also cut research. The threshold reaches \alRedeployedShare{} percent at a budget of \alCrossBudget{} percent; above it, research should pay even when customers can use all freed compute. Lower safety productivity moves the threshold down: below \alYieldSatCross{} percent of baseline productivity, customers should pay even when they cannot use more compute. A constant rule chosen before the safety technology is known loses little. Across the \alStateCount{} technologies considered, ten have safety work that does not also advance capability. In each of them the rule is within \alMaxNoSpill{} points of the informed choice on the same share grid (Section~\ref{sec:uncertainty}).

The numbers come from one conditional scenario, whose risk judgments are not estimates, and the funding answer rests on the curvature of service value shown above. Three more variations show what else it rests on. In the calibration, danger that continues after the takeoff raises the 5 percent threshold by at most \alThresholdRise{} points. Varying how the hazard is built keeps it between \alHzThrMin{} and \alHzThrMax{} percent (Section~\ref{sec:robust}). Freed compute that reaches customers gradually, rather than at once, makes finishing more valuable, not less (Corollary~\ref{cor:gradual}). And the comparison needs a consistent continuation. A model in which a finished frontier serves customers less than a slowed one credits safety research with gains that belong to the continuation (Appendix~\ref{app:reset}).

\subsection{Related literature}

This paper builds on \citet{gancommitments2026}, which uses the same technology to characterize when a risk ceiling and a development deadline can be promised together. Here the question is who should pay for safety compute, and when. The closest work is \citet{bdmdp2026}, a static race in which firms, or a planner, choose once how to divide resources between speed and safety: the research-funded margin. Growth models of existential risk trade safety against consumption instead \citep{jones2016,jones2025,aschenbrenner2020}. This paper adds customer-serving inference to the speed--safety allocation, in one planner's problem for a lab's compute, and follows how safety knowledge and customer services accumulate over a takeoff. Customers play the part consumption plays in the growth models. The resulting funding condition includes the services lost when research delays completion.

Timing is where the paper differs from the growth models. \citet{jones2016} divides research between ideas that raise consumption and ideas that save lives, and \citet{aschenbrenner2020} divides scientists between consumption and a stock of safety technology. When marginal utility declines fast, both shift resources toward safety as society grows richer. Proposition~\ref{prop:frontload} runs the other way, for three reasons made explicit in Section~\ref{sec:timing}. The service budget prices services the same at every date, danger ends with the takeoff, and safety learning depends on the share of compute, not its amount. \citet{trammell2026} show that faster technological development can lower cumulative risk when stagnation is not safe; they treat technology as one-dimensional, studying its speed. The allocation here chooses a direction, moving compute between safety and capability, a quantitative form of differential technological development \citep{bostrom2002,bostrom2014}. \citet{jones2024} asks when AI progress should continue and when it should stop, and \citet{jones2025} how much to spend to reduce AI risk. His last page asks whether a tax on GPUs should fund safety research.

Models of AI races study why developers may neglect safety and how policy can change their choices \citep{armstrong2016,naude2020,han2020}. \citet{jensen2023} let racing developers split purchased compute between safety and performance, and \citet{fudenberg2026} study pacing in a continuous-time race against an exogenous safety threshold. Here a single lab is regulated and the deadline stands in for the rival. Work on the timing of AI deployment compares the safety gained by waiting with the hazard of waiting \citep{binder2021,bostrom2026}. \citet{fli2023} asked AI labs to pause for at least six months the training of systems more powerful than GPT-4. In \citet{acemoglu2024}, adoption is gradual when society learns about a technology's risks over time; in \citet{gans2025adoption}, learning from adoption itself generally favors faster adoption. \citet{guerreiro2023} show that taxes fall short when developers' beliefs about risk differ. \citet{gans2025direction} studies how to regulate the direction of innovation when one research path may cause harm. \citet{koh2026} derive an adaptive speed limit when a firm learns by doing and by waiting. Here learning is a dated stock that dedicated compute builds, so the regulator chooses how much to learn, when, and who pays for it. Section~\ref{sec:uncertainty} chooses a fixed rule by minimax regret \citep{savage1951,manski2004}. The hazard structure follows catastrophe models in environmental economics, where the probability of a catastrophe depends on a stock the planner controls \citep{cropper1976,clarke1994,tsur1996,polasky2011}. Here a second stock, safety knowledge, scales hazard down.

Commitments of compute to safety predate the pacing proposals. \citet{openai2023} pledged 20 percent of the compute it had secured to date, over four years, to superintelligence alignment. \citet{bengio2024} call for at least a third of AI R\&D budgets to go to safety and ethical use. Enforcing an allocation requires seeing how compute is used. Compute is detectable and quantifiable \citep{sastry2024}, chip monitoring can verify rules on large training runs \citep{shavit2023}, and cloud providers can see whether a workload is training or inference \citep{heim2024}. Telling safety research from capability research is harder, and legitimate safety research can advance capability \citep{aifutures2026}; the spillover states of Section~\ref{sec:uncertainty} model the second difficulty.

\section{The allocation problem}\label{sec:model}

\subsection{Technology, risk, and customers}

Time is measured in months; Appendix~\ref{app:notation} lists the notation. Capability $F$ starts at zero and reaches a fixed endpoint at one; completion occurs at $T=\inf\{t:F(t)=1\}$. Installed compute is $K(t)=e^{\gamma t}$, with $K(0)=1$. Before completion, shares $x(t)$, $s(t)$, and $i(t)$ go to capability research, safety research, and customer inference:
\begin{equation}
x+s+i=1,\qquad x,s,i\geq0.
\label{eq:resource}
\end{equation}
The constraint says that every unit of compute goes to exactly one use. Below, research alone means capability research.

Capability research uses the current frontier model and advances at
\begin{equation}
\dot F=u(t,F,x)=A_0e^{gt+\rho F}\left(\frac{x}{x_0}\right)^{\varepsilon},
\qquad g=\varepsilon\gamma,\quad 0<\varepsilon<1.
\label{eq:capability}
\end{equation}
Compute and the capability of the research model both raise research speed. The reference allocation, or full speed, uses the shares $(x_0,s_0,i_0)$ throughout; the calibration sets them to 50, 5, and 45 percent (Section~\ref{sec:quantitative}). The reference arrival date at capability $F$ is $t_0(F)$.

Knowledge grows at
\begin{equation}
\dot S=\sigma(F)y(s),\qquad
y(s)=\psi+(1-\psi)\left(\frac{s}{s_0}\right)^{\alpha},
\qquad S(0)=0,
\label{eq:knowledge}
\end{equation}
with $0\leq\psi<1$ and $0<\alpha<1$. Here $\sigma(F)$ is \emph{safety productivity}: the knowledge a month of safety work at the reference share adds when capability is $F$. Knowledge is measured in units of log-hazard reduction. The term $\psi$ allows progress without dedicated compute; the compute-using part has diminishing returns in the safety share. In this baseline, knowledge neither depreciates nor loses value at stronger capabilities; the uncertainty exercise relaxes both.

Conditional on no catastrophe so far, hazard is
\begin{equation}
\lambda(t)=[h(F)+k(F)\dot F]e^{-S}.
\label{eq:hazard}
\end{equation}
State risk $h(F)$ accrues per unit of time spent at capability $F$, and step risk $k(F)$ per unit of capability gained; both are nonnegative. The split follows the state risk and transition risk of \citet{trammell2026}. Cumulative hazard and the probability of catastrophe are
\begin{equation}
H=\int_0^T\lambda(t)\,dt,\qquad P=1-e^{-H}.
\label{eq:risk}
\end{equation}
The catastrophe is an absorbing loss of human control over AI. In the baseline, danger ends at $F=1$; continuing danger is a separate numerical case.

Customers value inference capacity and model quality. Service quantity is $Q=e^{\mu F}\min\{i,\bar\imath\}K$, where $\bar\imath\leq1$ is the largest share customers can use, and its discounted value flow is
\begin{equation}
b(t,F,i)=e^{-rt}\sqrt{Q}
           =e^{(\gamma/2-r)t+\mu F/2}\sqrt{\min\{i,\bar\imath\}}.
\label{eq:flow}
\end{equation}
The flow is a service index, measured in index units, not revenue or GDP. Its square root gives diminishing marginal value and makes the allocation formula of Section~\ref{sec:timing} explicit.

After completion, capability stays at one and safety research continues at the reference share. What happens next is the \emph{continuation}, which has two primitives. \emph{Demand capacity} $\bar\imath$ is the largest share of installed compute customers can use, before and after completion. \emph{Continuing danger} $\xi\geq0$ is the share of the endpoint state hazard that persists after completion, decaying as knowledge keeps growing. In the baseline $\xi=0$, and danger ends with the takeoff. A finished frontier gives customers the \emph{continuation share} $i_c$, the same for the reference and every policy. The paper assumes full redeployment: customers receive all the compute they can use, $i_c=\min\{\bar\imath,1-s_0\}$. Lemma~\ref{lem:continuation} shows that, with the reference held fixed, choosing it for any policy loses nothing. Two values of demand capacity anchor the numerical work:
\begin{itemize}
\item \emph{Redeployed compute}: $\bar\imath=1$. Customers use any share, so a finished frontier gives them the compute that capability research no longer needs, $i_c=1-s_0$.
\item \emph{Saturated demand}: $\bar\imath=i_0$. Customers use at most the reference share, before and after completion.
\end{itemize}
The cap applies to a share of a growing stock, so even saturated customers absorb ever more compute. Redeployed compute is the lead case. Proposals for large inference allocations during restrained development presuppose that customers can absorb compute moved to inference \citep{aifutures2026}. That a finished frontier hands them the compute capability research no longer needs is this paper's assumption; Corollary~\ref{cor:gradual} lets that compute arrive gradually. Saturated demand is the alternative that reverses the funding answer at small budgets. A third continuation, the \emph{reference reset}, returns customers to the reference share after completion. Lemma~\ref{lem:continuation} shows that it provides fewer services without reducing hazard, and Appendix~\ref{app:reset} shows what it does to the results.

\subsection{Objective, constraints, and information}

For a service horizon $\tau$, let
\begin{equation}
J_\tau=\int_0^\tau b(t,F(t),i(t))\,dt,\qquad
C_\tau=J_{0,\tau}-J_\tau,
\label{eq:cost}
\end{equation}
where $J_{0,\tau}$ is the reference value over the same horizon and both paths follow the continuation after completion. $C_\tau$ is the net discounted loss of services, from reallocated inference and from postponed quality; extra services at one date offset losses at another. Customers \emph{pay} for safety compute in this sense: they forgo services. The model has no prices, so it says nothing about who bears the monetary cost.

For known technology, the regulator chooses an allocation path to
\begin{equation}
\min_{x(\cdot),s(\cdot)} H
\quad\text{subject to}\quad
C_\tau\leq B,\qquad
\sup_{F\in[0,1]}[t(F)-t_0(F)]\leq L,
\quad F(T)=1.
\label{eq:planner}
\end{equation}
Minimizing $H$ and minimizing $P$ give the same choice. The deadline applies at every intermediate capability, not only at completion, and implies $T\leq T_0+L$; it stands for a lead over a rival whose response is not modeled. The horizon satisfies $\tau\geq T_0+L$, so the service budget covers the whole transition.

The budget needs an absolute unit. Once a policy and the reference have both completed and caught up, the policy's cost stops growing, while the reference's services keep growing (Lemma~\ref{lem:budget} in Appendix~\ref{app:proof}). So the same percentage permits a larger sacrifice at a longer horizon. Budgets here are fractions $b$ of a fixed unit $\bar J$: the reference's discounted services over its first 24 months with reference shares throughout. No continuation changes $\bar J$, so every case faces the same absolute budget $B=b\bar J$.

The regulator can condition on the current state. The numerical work uses three smaller classes: constant shares, two-stage rules, and segment rules. Two-stage rules switch shares once, at a capability milestone $F_c$; segment rules hold shares constant on each of six or twelve capability segments. Under uncertainty, constant shares are fixed before the technology is revealed (Section~\ref{sec:uncertainty}). One frontier lab implements the regulated allocation; compute uses are observable, allocations are enforceable, and the lab has no private objective in the model.

\section{Who should pay for safety compute}\label{sec:funding}

The answer comes in four steps. Proposition~\ref{prop:funding} compares the hazard each source removes. Proposition~\ref{prop:source} compares what each costs customers. Lemma~\ref{lem:continuation} and Corollary~\ref{cor:threshold} turn that comparison into a threshold in how much compute customers can use. Proposition~\ref{prop:budget} shows how the threshold moves with the service budget.

Constant shares make both funding sources explicit. Take a constant research share $x>0$, a constant safety share $s$, flat $\sigma$, and no capability spillover from safety work. Put $m=(x/x_0)^\varepsilon$, $a(s)=\sigma y(s)$, and $z(F)=e^{gt_0(F)}-1$. Integrating capability development gives
\begin{equation}
t_x(F)=\frac1g\log\left(1+\frac{z(F)}m\right),\qquad
u_x(F)=A_0e^{\rho F}[m+z(F)].
\label{eq:constantpath}
\end{equation}
Here $t_x(F)$ is the date at which capability $F$ arrives, and $u_x(F)$ is the pace of progress there. A research share below $x_0$ makes $m<1$ and delays every arrival. Since $S(F)=a(s)t_x(F)$, cumulative hazard can be written over capability:
\begin{equation}
H(x,s)=\int_0^1\left[\frac{h(F)}{u_x(F)}+k(F)\right]
                  e^{-a(s)t_x(F)}\,dF.
\label{eq:constantH}
\end{equation}
The first factor is the hazard met at capability $F$; the second is the protection that knowledge accumulated by the arrival date provides. Growing compute enters through $t_x$: slower research does not shift the reference path by a constant number of months.

\begin{proposition}[Hazard effects of the two funding sources]\label{prop:funding}
In the constant-share environment, suppose differentiation under the integral is valid, and define
\[
\ell(F)=-\frac{\partial t_x(F)}{\partial x}
 =\frac{\varepsilon z(F)}{gx[m+z(F)]},\qquad
\vartheta(F)=\frac{\partial\log u_x(F)}{\partial x}
 =\frac{\varepsilon m}{x[m+z(F)]}.
\]
Then
\begin{align}
H_x&=\underbrace{a(s)\int_0^1\ell(F)
       \left[\frac{h(F)}{u_x(F)}+k(F)\right]e^{-a(s)t_x(F)}\,dF}_{\text{less knowledge before later risk}}
\nonumber\\[-2pt]
&\hspace{8mm}-\underbrace{\int_0^1\frac{h(F)}{u_x(F)}\vartheta(F)
                    e^{-a(s)t_x(F)}\,dF}_{\text{less time exposed to state risk}},
\label{eq:Hx}\\
H_s&=-a'(s)\int_0^1t_x(F)
       \left[\frac{h(F)}{u_x(F)}+k(F)\right]e^{-a(s)t_x(F)}\,dF\leq0.
\label{eq:Hs}
\end{align}
One more unit of safety share changes hazard by $H_s$ when customers pay and by $H_s-H_x$ when research pays. Funding from research gives the larger hazard reduction exactly when $H_x>0$.
\end{proposition}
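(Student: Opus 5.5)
The plan is to differentiate the closed-form hazard \eqref{eq:constantH} directly, using the explicit path \eqref{eq:constantpath}.

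\textbf{Derivatives of the path.} With $m=(x/x_0)^\varepsilon$ we have $\partial m/\partial x=\varepsilon m/x$. From $t_x=g^{-1}\log(1+z/m)$,
\[
\frac{\partial t_x}{\partial x}=\frac1g\cdot\frac{-z\,m^{-2}}{1+z/m}\cdot\frac{\varepsilon m}{x}
=-\frac{\varepsilon z}{gx(m+z)},
\]
which is $-\ell(F)$. From $u_x=A_0e^{\rho F}(m+z)$,
\[
\frac{\partial\log u_x}{\partial x}=\frac{\varepsilon m/x}{m+z},
\]
which is $\vartheta(F)$. Neither $z(F)$ nor $t_0(F)$ depends on $x$ or $s$, because both are defined by the reference path. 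So these two derivatives are all that is needed.

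\textbf{Computing $H_x$ and $H_s$.} Differentiate under the integral, as assumed.

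For $H_x$, the integrand depends on $x$ through two channels.
\begin{itemize}
\item The factor $h/u_x$ has derivative $-(h/u_x)\vartheta$. This gives the second term of \eqref{eq:Hx}, the reduced exposure to state risk.
\item The factor $e^{-a(s)t_x}$ has derivative $-a(s)(\partial t_x/\partial x)e^{-at_x}=a(s)\ell\, e^{-at_x}$. Multiplying by the full bracket $[h/u_x+k]$ gives the first term, less knowledge before later risk.
\end{itemize}

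For $H_s$, only $a(s)=\sigma y(s)$ depends on $s$. Differentiating $e^{-a t_x}$ gives $-a'(s)t_x e^{-at_x}$, which is \eqref{eq:Hs}. For the sign:
\begin{itemize}
\item $a'(s)=\sigma(1-\psi)\alpha s^{\alpha-1}/s_0^\alpha\geq0$;
\item $t_x\geq0$, since $z\geq0$ for $F\geq0$;
\item $h\geq0$, $k\geq0$, and $u_x>0$.
\end{itemize}
Hence $H_s\leq0$.

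\textbf{The two funding sources.} Next, apply the resource constraint \eqref{eq:resource}. One more unit of $s$ funded by customers lowers $i$ and leaves $x$ unchanged. Since $H$ depends only on $(x,s)$, hazard changes by $H_s$. Funded by research, the perturbation is $ds=1$, $dx=-1$, so the change is $H_s-H_x$.

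The research-funded change is the more negative one exactly when $H_s-H_x<H_s$, that is, when $H_x>0$.

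\textbf{Main obstacle.} There is no genuine obstacle. The only care needed is in the chain rule through $m$, and in noting that $t_0(F)$ is fixed because it belongs to the reference path. Validity of differentiation under the integral is assumed in the statement, so no dominated-convergence argument is required.
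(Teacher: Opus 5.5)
Your proposal is correct and follows the paper's own argument. You differentiate the capability-domain hazard \eqref{eq:constantH} through the protection factor $e^{-a(s)t_x}$ and through $h/u_x$, and then read off the two funding directions $(0,ds,-ds)$ and $(-ds,ds,0)$; the only difference is that the paper's proof also re-derives the path \eqref{eq:constantpath} by integrating $e^{-\rho F}\dot F=A_0me^{gt}$, which you take as given from the text.
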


\begin{proof}
Integrating $e^{-\rho F}\dot F=A_0me^{gt}$ from zero gives $(1-e^{-\rho F})/\rho=A_0m(e^{gt}-1)/g$. The reference path has $m=1$, so at the same capability $e^{gt_x(F)}-1=z(F)/m$, which is the first part of \eqref{eq:constantpath}. Then $u_x=A_0me^{gt_x+\rho F}=A_0e^{\rho F}[m+z(F)]$. Differentiating \eqref{eq:constantH}, the derivative of the protection factor in $x$ is $a\ell e^{-at_x}$, and the derivative of $h/u_x$ is $-(h/u_x)\vartheta$. The safety share enters only through $a(s)$. The two funding directions are $(dx,ds,di)=(0,ds,-ds)$ and $(-ds,ds,0)$.
\end{proof}

Both sources buy the same direct knowledge gain; they differ through the research effect in \eqref{eq:Hx}. Here $\ell(F)$ is the delay, in months per unit of research share, with which capability $F$ arrives. Slower research gives knowledge that much more time to accumulate before each level is reached, but prolongs exposure to state risk at the level. The exposure term is the risk accounting of \citet{trammell2026}: slower passage lengthens exposure to state risk, while step risk proportional to speed does not depend on speed. The knowledge term is what a dedicated safety stock adds. Learning is necessary for the hazard advantage but not sufficient; the calibration below gives a case in which knowledge accumulates and protects later risk, yet $H_x<0$.

\begin{corollary}[A failure case]\label{cor:failure}
If $a(s)=0$ and state risk is positive on a set of positive measure, then $H_x<0$: diverting compute from research to a safety use that produces no knowledge increases hazard. If all hazard is step risk and $a(s)>0$, then $H_x\geq0$, strictly when learning changes exposure to positive later step risk.
\end{corollary}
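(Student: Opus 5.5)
Both claims follow directly from the decomposition \eqref{eq:Hx} in Proposition~\ref{prop:funding}. The plan is to read off the sign of each of its two terms under the stated hypotheses. Throughout I use the positivity facts visible in the definitions. Since $x>0$ and $m>0$, we have $\vartheta(F)=\varepsilon m/(x[m+z])>0$ for every $F$. Since $z(F)=e^{gt_0(F)}-1\geq0$, the delay $\ell(F)\geq0$, and it is strictly positive for $F>0$, where $t_0(F)>0$. Finally $u_x(F)>0$ and the protection factor $e^{-a(s)t_x(F)}>0$.

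\emph{First claim.} Set $a(s)=0$. The knowledge term in \eqref{eq:Hx} carries the prefactor $a(s)$, so it vanishes identically; the protection factor then equals one. What remains is
\[
H_x=-\int_0^1\frac{h(F)}{u_x(F)}\vartheta(F)\,dF .
\]
The integrand is nonnegative because $h\geq0$. It is strictly positive wherever $h>0$, and by hypothesis that happens on a set of positive measure. Hence the integral is strictly positive and $H_x<0$.

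For the interpretation, recall from Proposition~\ref{prop:funding} that research funding changes hazard by $H_s-H_x$. With no knowledge production, $a'$ plays no role along this direction. Diverting compute from research therefore changes hazard by $-H_x>0$, which is the stated increase. If one worries about what $H_s$ means when $a(s)=0$ (that is, $\sigma=0$, or $\psi=0$ with $s=0$), I would state the conclusion directly in terms of the research-share derivative: lowering $x$ raises $H$.

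\emph{Second claim.} Set $h\equiv0$. The exposure term disappears, leaving
\[
H_x=a(s)\int_0^1\ell(F)\,k(F)\,e^{-a(s)t_x(F)}\,dF .
\]
Here $a(s)>0$, $\ell\geq0$, $k\geq0$, and the exponential is positive, so the integrand is nonnegative and $H_x\geq0$. For strictness, I interpret ``learning changes exposure to positive later step risk'' as saying that $\ell(F)k(F)>0$ on a set of positive measure. Since $\ell>0$ on $(0,1]$, this is equivalent to $k>0$ on a positive-measure subset of $(0,1]$. The integral is then strictly positive.

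The only delicate step is making the strictness condition precise. I will state it explicitly as the positive-measure condition on $\{F:\ell(F)k(F)>0\}$, and note that it reduces to positive measure of $\{k>0\}$ because the delay is strictly positive at every $F>0$. Everything else is a sign check on the decomposition already established in Proposition~\ref{prop:funding}.
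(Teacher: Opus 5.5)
Your proposal is correct and follows the paper's own argument: the paper sets $a=0$ or $h=0$ in \eqref{eq:Hx} and notes that the surviving integrands have the stated signs. Your positivity checks on $\vartheta$, $\ell$, and $u_x$, and your reading of the strictness condition as $\ell k>0$ on a set of positive measure (equivalently $k>0$ on a positive-measure subset of $(0,1]$, since $\ell>0$ there), fill in what the paper leaves implicit.
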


\begin{proof}
Set $a=0$ or $h=0$ in \eqref{eq:Hx}. The remaining integrands have the stated signs.
\end{proof}

The failure case concerns constant shares. A calendar-time pause instead builds a compute overhang, after which the frontier crosses dangerous levels faster, so an early pause can lower hazard even without learning \citep{gancommitments2026}. Without the overhang, the pause condition of \citet{binder2021} and \citet{bostrom2026}, written with the pause's own safety allocation in \citet[Proposition~1]{gancommitments2026}, applies.

Hazard is half of the funding decision; services are the other half. Both depend on what happens after the takeoff. Let $V_c(T)$ be the discounted service value of the continuation when the takeoff completes at $T$. Call $M(T)=-V_c'(T)$ the \emph{value of finishing}: what completing one month earlier adds to services. Let $\Phi e^{-S_T}$ be the hazard that remains after completion, which knowledge at completion, $S_T$, scales down as it scales down hazard during the takeoff. Write $\kappa=\gamma/2-r$ for the growth rate of the discounted service flow at fixed capability and share. Write $T_x=t_x(1)$ for the completion date, $b_T=b(T_x,1,i)$ for the slowed frontier's service flow just before completion, and
\[
J^{\mathrm{pre}}=\sqrt i\int_0^1\frac{e^{\kappa t_x(F)+\mu F/2}}{u_x(F)}\,dF
\]
for the discounted service value before completion.

\begin{proposition}[The funding-source condition]\label{prop:source}
In the constant-share environment, let $0<i<\bar\imath$ and $H_s<0$. Let both funding directions be feasible for small moves (neither a control bound nor the deadline binds), and let the service horizon cover completion after such moves. One more unit of safety share costs discounted services
\begin{align}
c_C&=\frac{J^{\mathrm{pre}}}{2i}\qquad\text{when customers pay, and}
\label{eq:cC}\\
c_R&=\underbrace{\frac\mu2\sqrt i\int_0^1e^{\kappa t_x(F)+\mu F/2}\ell(F)\,dF}_{Q_x}
\;+\;\ell(1)\big[M(T_x)-b_T\big]
\qquad\text{when research pays.}
\label{eq:cR}
\end{align}
It cuts cumulative hazard, including the hazard after completion, by
\begin{align}
R_C&=-H_s+\Phi e^{-S_T}a'(s)T_x\qquad\text{when customers pay, and}
\label{eq:RC}\\
R_R&=H_x-H_s+\Phi e^{-S_T}\big[a'(s)T_x+a(s)\ell(1)\big]\qquad\text{when research pays.}
\label{eq:RR}
\end{align}
If $c_R>0$, research buys more hazard reduction per unit of service cost than customers exactly when $R_R/c_R>R_C/c_C$, that is, when the \emph{finishing premium} $M(T_x)/b_T-1$ lies below
\begin{equation}
\pi^\ast=\frac{c_C\,R_R/R_C-Q_x}{\ell(1)\,b_T}.
\label{eq:threshold}
\end{equation}
If $c_R\leq0$ and $R_R>0$, funding from research lowers hazard and raises services.
\end{proposition}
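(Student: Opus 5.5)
The plan is to write total discounted services as the pre-completion integral plus the continuation, $J=J^{\mathrm{pre}}(x,i)+V_c(T_x)$, and total hazard as the takeoff integral \eqref{eq:constantH} plus the post-completion term $\Phi e^{-S_T}$ with $S_T=a(s)T_x$. I would then differentiate each along the two funding directions of Proposition~\ref{prop:funding}: $(0,ds,-ds)$ and $(-ds,ds,0)$. Using the change of variables $dt=dF/u_x(F)$ from \eqref{eq:constantpath}, the pre-completion value is
\[
J^{\mathrm{pre}}=\sqrt i\int_0^1 e^{\kappa t_x(F)+\mu F/2}/u_x(F)\,dF.
\]
Since $i<\bar\imath$, the minimum in \eqref{eq:flow} is not binding. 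The continuation depends on the policy only through $T_x$, because the continuation share $i_c$ is common to every policy.

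\textbf{Customers pay.} Only $i$ moves, so $c_C=-\partial J^{\mathrm{pre}}/\partial i=J^{\mathrm{pre}}/(2i)$, which is \eqref{eq:cC}. The path $t_x$ is unchanged. Hazard changes by $H_s$ during the takeoff and by $-\Phi e^{-S_T}a'(s)T_x$ after it, which gives \eqref{eq:RC}.

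\textbf{Research pays.} Now $x$ falls and $i$ is fixed. Differentiating $J^{\mathrm{pre}}$ in $-x$ requires care, because the integrand depends on $x$ through both $t_x$ and $u_x$. I would rewrite the integral back in calendar time, $\int_0^{T_x}e^{\kappa t+\mu F_x(t)/2}\sqrt i\,dt$, and differentiate that form instead. The moving upper limit contributes $-b_T\,\partial T_x/\partial x$. Lower capability at each date contributes the quality loss. The cleaner route is to differentiate in capability space and then check, via $\partial t_x/\partial x=-\ell$ and $\vartheta$, that the $\kappa$-terms and the $1/u_x$-terms combine. Specifically, the identity $\partial_x(e^{\kappa t_x}/u_x)$ should integrate by parts into the boundary term $-\ell(1)b_T$ plus $\frac{\mu}{2}\sqrt i\int e^{\kappa t_x+\mu F/2}\ell\,dF$. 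The continuation contributes $-V_c'(T_x)\ell(1)=M(T_x)\ell(1)$.

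Summing these pieces gives \eqref{eq:cR}. This integration-by-parts identity is the main obstacle. I would verify it by noting that in calendar time the derivative is just $\int b\cdot(\mu/2)\,\partial F/\partial(-x)\,dt$ minus the endpoint flow, and that $\partial F/\partial(-x)$ at fixed $t$ equals $u_x\ell$ evaluated at $F$. Converting back to capability space then yields $Q_x$ directly.

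For hazard under research funding, the takeoff part changes by $H_s-H_x$, from Proposition~\ref{prop:funding}. The post-completion term $\Phi e^{-aT_x}$ falls by the factor $e^{-S_T}$ times $a'(s)T_x+a(s)\ell(1)$, because knowledge gains both a higher rate and a longer takeoff. This gives \eqref{eq:RR}. The feasibility and horizon assumptions guarantee that these one-sided derivatives are the relevant marginal rates, and that the budget counts all of the lost services.

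\textbf{Threshold.} With $c_C>0$, $R_C>0$ (from $H_s<0$ and $a'\ge0$), and $c_R>0$, the comparison $R_R/c_R>R_C/c_C$ is equivalent to $c_R<c_CR_R/R_C$. Substituting $c_R=Q_x+\ell(1)b_T[M/b_T-1]$ and dividing by $\ell(1)b_T>0$ gives the condition premium $<\pi^\ast$, which is \eqref{eq:threshold}. Finally, if $c_R\le0$ and $R_R>0$, the move from research weakly raises services and lowers hazard, by the definitions of the two derivatives.
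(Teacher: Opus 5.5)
Your proposal is correct and is essentially the paper's own proof. You write $J=\int_0^{T_x}b\,dt+V_c(T_x)$ and differentiate in calendar time using $\partial F_x(t)/\partial x=\ell u_x$ and $\partial T_x/\partial x=-\ell(1)$. Changing variables by $dt=dF/u_x$ then gives $Q_x+\ell(1)[M(T_x)-b_T]$; the paper also mentions your capability-domain integration by parts, via $d\ell/dF=\vartheta/u_x$ and $\ell(0)=0$. On the hazard side you add $\Phi e^{-S_T}$ with $S_T=a(s)T_x$, and the threshold follows because $c_R$ is linear in $M$ with slope $\ell(1)>0$.

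A few signs need correcting, though none changes a formula. The customer cost is $+\partial J^{\mathrm{pre}}/\partial i$, not $-\partial J^{\mathrm{pre}}/\partial i$. Likewise, it is $\partial F_x/\partial x$, not $\partial F_x/\partial(-x)$, that equals $\ell u_x$, and that calendar-time derivative is taken in $x$. Your two sign flips in that sentence cancel.
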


Customer funding costs what the marginal unit of inference would have produced before completion. With a square root, that is half the average value $J^{\mathrm{pre}}/i$: since $J^{\mathrm{pre}}$ is proportional to $\sqrt i$, its derivative in $i$ is $J^{\mathrm{pre}}/(2i)$. Funding from research costs two things. The first, $Q_x$, is the quality customers lose because each capability level arrives $\ell(F)$ months later. The second is the \emph{endpoint term}: the completion delay $\ell(1)$ times the value of finishing, net of what the slowed frontier already provides. If a finished frontier serves customers better than the slowed one, the endpoint term is a cost. If it serves them worse, the term is a saving. Delay pays for itself only if that saving exceeds the quality loss, so that $c_R\leq0$. On the hazard side, both sources add knowledge by completion, which protects whatever danger remains. Customer funding adds it through the larger safety share; funding from research adds it through that share and the extra time. The proof is in Appendix~\ref{app:proof}.

Proposition~\ref{prop:source} is the paper's funding condition. It needs two facts about the continuation: the value of finishing and the hazard that remains. The next result derives the first from the regulator's own problem.

\begin{lemma}[An optimal continuation]\label{lem:continuation}
Suppose that after completion capability stays at one, safety stays at the reference share, and customers use at most $\bar\imath$. Hold the reference path, and so $J_{0,\tau}$, fixed. Then for any policy, an optimal continuation can give customers $i_c=\min\{\bar\imath,1-s_0\}$ after completion: doing so never raises hazard and never tightens the budget. With this continuation shared by the reference and every policy, the value of finishing is $M(T)=b(T,1,i_c)=e^{\kappa T+\mu/2}\sqrt{i_c}$. A policy continuation that gives customers less, such as the reference reset when $\bar\imath>i_0$, provides fewer services without reducing hazard.
\end{lemma}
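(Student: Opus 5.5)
The argument rests on one observation: after completion, the split of compute between customers and capability research has no effect on hazard. Once $F=1$, capability stays at one and safety stays at $s_0$. Knowledge therefore grows at $\sigma(1)y(s_0)$ whatever the customer share is. The post-completion hazard, which is zero in the baseline and $\xi h(1)e^{-S}$ under continuing danger, depends only on $S$, hence only on $S_T$ and elapsed time, and not on $i$. The continuation's customer share therefore enters the regulator's problem \eqref{eq:planner} only through the budget $C_\tau$. I will show that $i_c=\min\{\bar\imath,1-s_0\}$ maximizes the post-completion service flow pointwise in time. It then weakly minimizes $C_\tau$ for any pre-completion path, with the reference, and hence $J_{0,\tau}$, held fixed.

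The first step fixes a policy and its pre-completion path, and with it $T$, $F(\cdot)$ on $[0,T]$, and $S_T$. It then records that any admissible continuation share $i(t)$ for $t>T$ satisfies $i\le 1-s_0$, from \eqref{eq:resource} with $s=s_0$ and $x\ge0$. The second step uses \eqref{eq:flow} at $F=1$: the flow is $e^{\kappa t+\mu/2}\sqrt{\min\{i,\bar\imath\}}$, which is nondecreasing in $i$. On the feasible set it is maximized by $i=\min\{\bar\imath,1-s_0\}$, and any larger $i$ up to $1-s_0$ gives the same value. Integrating from $T$ to $\tau$, every other continuation yields weakly smaller $J_\tau$, hence weakly larger $C_\tau$. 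Its hazard is the same, and the deadline constraint concerns only $F\in[0,1]$, which is unaffected. So choosing $i_c$ never raises hazard and never tightens the budget. Any policy that is feasible and optimal under some continuation stays feasible, with the same $H$, under $i_c$, which proves the optimality claim.

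The third step computes the value of finishing. With the continuation shared by all, $V_c(T)=\int_T^\tau e^{\kappa t+\mu/2}\sqrt{i_c}\,dt$ for fixed $\tau$, so $M(T)=-V_c'(T)=e^{\kappa T+\mu/2}\sqrt{i_c}=b(T,1,i_c)$ by the fundamental theorem of calculus. The statement presumes $\tau\ge T$ for every policy considered, which the horizon assumption $\tau\ge T_0+L$ together with the deadline guarantees. The last claim, about the reference reset, follows from the pointwise comparison. The reset gives $\min\{i_0,\bar\imath\}=i_0<\min\{\bar\imath,1-s_0\}$ when $\bar\imath>i_0$, since $i_0<1-s_0$ whenever $x_0>0$. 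Its flow is therefore strictly lower on $(T,\tau]$, while its hazard is identical.

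The main obstacle is a subtlety rather than a calculation. The reference's own continuation must be held fixed, as the lemma stipulates, so that changing the policy continuation moves only $J_\tau$ and not $J_{0,\tau}$. Equally, ``shared by the reference and every policy'' must be read as a convention adopted afterward. The optimality argument compares continuations for a given policy against a fixed benchmark; it does not jointly re-optimize the benchmark. I would state this explicitly. I would also note that with $\xi>0$ the hazard still depends only on $S$, which the continuation share cannot affect, so the argument does not use $\xi=0$.
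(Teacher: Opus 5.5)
Your proposal is correct and takes the same route as the paper's proof. Both rest on the fact that post-completion hazard does not depend on how the remaining $1-s_0$ of compute is split, so raising the policy's continuation share to $\min\{\bar\imath,1-s_0\}$ weakly relaxes the budget at unchanged hazard with $J_{0,\tau}$ held fixed, and both then differentiate $V_c(T)$; your explicit checks that $i_0<1-s_0$ when $x_0>0$ and that the argument does not need $\xi=0$ fill in points the paper leaves implicit.
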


After completion, hazard does not depend on how the remaining compute is split between customers and research that no longer advances capability. So the split matters for services alone. The lemma holds the reference fixed while it chooses the policy's continuation. Changing the continuation of both paths together is a different operation: a larger common share raises the reference's services first, between the two completion dates, so it makes delay costlier. The paper therefore takes full redeployment as an assumption about the shared endpoint, and the lemma says that choosing it for any policy loses nothing. The finishing premium is then $\sqrt{i_c/i}-1$, and the funding decision becomes a condition on demand capacity. At the 5 percent customer-funded optimum, where $i=\alRdI$, the premium is $\sqrt{0.95/\alRdI}-1\approx\alRdPremRed$ under redeployed compute and $\sqrt{0.45/\alRdI}-1\approx\alRdPremSat$ under saturated demand.

\begin{corollary}[A threshold in demand capacity]\label{cor:threshold}
Under the continuation of Lemma~\ref{lem:continuation} and continuing danger $\xi$, so that $\Phi=\xi h(1)/\sigma$, research is the cheaper source exactly when
\begin{equation}
\min\{\bar\imath,1-s_0\}<\bar\imath^\ast(\xi)=\big[\max\{0,\sqrt i\,(1+\pi^\ast)\}\big]^2.
\label{eq:capthreshold}
\end{equation}
If $\sqrt i\,(1+\pi^\ast)\le0$, no demand capacity makes research the cheaper source. Continuing danger raises the signed root $\sqrt i\,(1+\pi^\ast)$ if and only if $1+a(s)\ell(1)/[a'(s)T_x]>(H_x-H_s)/(-H_s)$. That is, the knowledge the two sources add by completion stands in a higher ratio than their hazard reductions during the takeoff. The demand-capacity threshold is then nondecreasing in continuing danger and rises strictly on the branch where the signed root is positive; before that branch it can stay at zero. As continuing danger grows, $\bar\imath^\ast$ approaches its value with $R_R/R_C$ replaced by $1+a(s)\ell(1)/[a'(s)T_x]$.
\end{corollary}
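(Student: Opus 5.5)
The plan is to substitute the value of finishing from Lemma~\ref{lem:continuation} into Proposition~\ref{prop:source} and then read the comparison $M/b_T-1<\pi^\ast$ as an inequality in demand capacity.

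\textbf{The threshold.} Under the shared continuation, $M(T_x)=e^{\kappa T_x+\mu/2}\sqrt{i_c}$ and $b_T=e^{\kappa T_x+\mu/2}\sqrt{i}$, since $i<\bar\imath$ and $F=1$ at completion. So the finishing premium is $\sqrt{i_c/i}-1$. The condition of Proposition~\ref{prop:source} becomes $\sqrt{i_c}<\sqrt i\,(1+\pi^\ast)$, with $i_c=\min\{\bar\imath,1-s_0\}$.

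This step needs $c_R>0$. That holds automatically for the relevant comparison: $c_R=Q_x+\ell(1)b_T(\sqrt{i_c/i}-1)$ is nonnegative once $i_c\ge i$, and $i_c\ge i$ because $i<\bar\imath$ and $i\le 1-s$. I would note that the $c_R\le0$ case cannot arise when $i_c\ge i$.

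Because $\sqrt{i_c}\ge0$, the inequality holds exactly when $i_c<[\max\{0,\sqrt i(1+\pi^\ast)\}]^2$. When the signed root is nonpositive it never holds, which gives the ``no demand capacity'' clause.

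\textbf{Continuing danger.} Next, set $\Phi=\xi h(1)/\sigma$ and track how $\xi$ enters. Only $R_R/R_C$ in $\pi^\ast$ depends on $\Phi$; the terms $c_C$, $Q_x$, $\ell(1)$ and $b_T$ do not. Write $\Phi e^{-S_T}=\phi$. Then
\[
\frac{R_R}{R_C}=\frac{(H_x-H_s)+\phi\,[a'T_x+a\ell(1)]}{-H_s+\phi\,a'T_x}.
\]
This is a linear-fractional function of $\phi\ge0$ with positive denominator, since $-H_s>0$ and $a'>0$. It is therefore monotone in $\phi$. It is increasing if and only if
\[
[a'T_x+a\ell(1)](-H_s)>(H_x-H_s)\,a'T_x,
\]
which is the stated condition $1+a\ell(1)/(a'T_x)>(H_x-H_s)/(-H_s)$. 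As $\phi\to\infty$, the ratio tends to $1+a\ell(1)/(a'T_x)$, which gives the limit claim.

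Since $\pi^\ast$ is affine increasing in $R_R/R_C$ (the coefficient $c_C/(\ell(1)b_T)$ is positive), the signed root $\sqrt i(1+\pi^\ast)$ moves with $\xi$ in the same direction. The map $v\mapsto[\max\{0,v\}]^2$ is nondecreasing, and strictly increasing for $v>0$. So $\bar\imath^\ast$ is nondecreasing, rises strictly on the positive branch, and can sit at zero before that branch.

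\textbf{Main obstacle.} The delicate point is the claim that $S_T$, $T_x$, $H_x$ and $H_s$ do not depend on $\xi$. This holds because post-completion danger enters only the hazard accounting after $T$, through $\Phi e^{-S_T}$, and not the takeoff path at the fixed allocation $(x,s,i)$. I would state this explicitly so that the comparative static is at a given allocation, as the corollary intends. I would also check that the remaining post-completion hazard with knowledge growing at rate $\sigma$ integrates to $\xi h(1)e^{-S_T}/\sigma$, which justifies $\Phi$.
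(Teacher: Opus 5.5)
Your proposal is correct and follows the paper's proof essentially step for step: you substitute $M(T)=b(T,1,i_c)$ to get $\sqrt{i_c}<\sqrt i\,(1+\pi^\ast)$, write $R_R/R_C$ as a linear-fractional function of $\Phi$ whose derivative has the sign of $BC-AD$ and whose limit is $B/D$, and pass monotonicity through the affine map to $\pi^\ast$ and then through the clipped square. Your added check that $c_R>0$, which the paper leaves implicit, is useful, but $i_c\ge i$ needs $i\le 1-s_0$, and that follows from $i\le 1-s$ only when $s\ge s_0$ (as in every allocation the paper uses); even if $c_R\le0$ with $R_R,R_C>0$, one checks directly that $\sqrt{i_c}<\sqrt i\,(1+\pi^\ast)$ still holds, so the ``exactly when'' is unaffected.
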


In the calibration of Section~\ref{sec:quantitative}, the hazard side favors funding from research. Take the customer-funded optimum for a budget of 5 percent of $\bar J$: research at 50 percent and safety at \alBoundaryS{} percent. There the knowledge term in \eqref{eq:Hx} is \alKnowledgeTerm{} and the exposure term \alExposureTerm{}, so $H_x=\alHx$ and $H_s=\alHs$. A unit of safety share funded by research cuts \alHazardAdvantage{} percent more hazard than one funded by customers, the ratio of $H_x$ to $-H_s$.

Each percentage point moved from research delays completion by \alDelayPerPoint{} months. At this optimum customers keep $i=\alRdI$ and $J^{\mathrm{pre}}=\alRdJpre$. So customer funding costs $c_C=\alRdJpre/(2\times\alRdI)\approx\alCostCustomers$ index units per unit of share, or \alCostCustomersPoint{} index units per percentage point (\alCostCustomersPointPct{} percent of the budget unit). The 5 percent budget is \alBudgetFiveIndex{} index units. Funding from research costs $Q_x=\alQuality$ in quality plus the endpoint term. With no continuing danger, research is the cheaper source when the finishing premium is below \alPremiumThreshold{} percent, that is, when demand capacity is below $(\alRdSqrtI\times\alRdOnePlusPi)^2\approx\alRdThreshold$, or \alThreshold{} percent. Redeployed compute lies far above the threshold (Figure~\ref{fig:source}, panel (a)). Finishing raises the frontier's service flow by \alPremiumRedeployed{} percent, and funding from research costs \alCostResearchHigh{} index units. It cuts hazard by only \alPerCostResearchHigh{} per index unit, against \alPerCostCustomers{} for customer funding, so customers are the cheaper source at this allocation. Saturated demand lies below the threshold: finishing adds only \alPremiumSaturated{} percent, and funding from research costs \alCostResearchLow{} and cuts hazard by \alPerCostResearchLow{}, so research is the cheaper source.

The hazard advantage needs high enough safety productivity. Scale safety productivity down with the hazard functions held fixed. At a tenth of baseline productivity, the knowledge term is \alKnowledgeTenth{} and the exposure term \alExposureTenth{}, so $H_x=\alHxTenth$. Knowledge still accumulates and protects later risk, and both sources still cut hazard, but customer funding cuts more. $H_x$ changes sign at \alYieldHxZero{} percent of baseline productivity.

Continuing danger barely moves the threshold. By completion, a unit of share moved into safety adds \alKnowledgeResearch{} units of knowledge when research pays and \alKnowledgeCustomer{} when customers pay. That ratio, \alKnowledgeRatio{}, is close to the ratio of takeoff hazard reductions, \alTakeoffRatio{}. So the threshold rises only to \alThresholdOne{} percent at $\xi=1$, when \alAfterShareOne{} percent of hazard falls after completion. However much danger continues, it is at most \alThresholdLimit{} percent (Figure~\ref{fig:source}, panel (b)). Redeployed compute stays above the threshold in every case. Panel (c) shows the best constant rule across demand capacity. It cuts research to \alSatConstX{} percent at a demand capacity of 45 percent and to \alSweepSixty{} percent at 60 percent. It leaves research at 50 percent from a capacity of \alSweepSwitchZero{} percent, or from \alSweepSwitchOne{} percent when $\xi=1$.

\begin{figure}[htbp]\centering
\begin{minipage}{.325\textwidth}\begin{tikzpicture}\begin{axis}[width=\linewidth,height=5.4cm,xlabel style={align=center},tick label style={font=\scriptsize},label style={font=\footnotesize},title style={font=\small},grid=major,grid style={gray!15},xlabel={Continuation share\\(\%)},ylabel={Hazard cut per unit of cost},title={(a) Marginal funding sources},width=.9\linewidth,xmin=30,xmax=95,ymin=0,ymax=0.179,legend style={draw=none,font=\scriptsize,at={(0.98,0.98)},anchor=north east},scaled y ticks=false,yticklabel style={/pgf/number format/fixed,/pgf/number format/precision=2}]
\addplot[thick,red!75!black,no marks] coordinates {(30.00000,0.16237) (31.00000,0.15698) (32.00000,0.15203) (33.00000,0.14744) (34.00000,0.14319) (35.00000,0.13923) (36.00000,0.13554) (37.00000,0.13208) (38.00000,0.12884) (39.00000,0.12580) (40.00000,0.12293) (41.00000,0.12022) (42.00000,0.11766) (43.00000,0.11523) (44.00000,0.11293) (45.00000,0.11075) (46.00000,0.10867) (47.00000,0.10668) (48.00000,0.10479) (49.00000,0.10299) (50.00000,0.10126) (51.00000,0.09960) (52.00000,0.09802) (53.00000,0.09649) (54.00000,0.09503) (55.00000,0.09363) (56.00000,0.09227) (57.00000,0.09097) (58.00000,0.08972) (59.00000,0.08851) (60.00000,0.08734) (61.00000,0.08621) (62.00000,0.08512) (63.00000,0.08406) (64.00000,0.08304) (65.00000,0.08205) (66.00000,0.08109) (67.00000,0.08016) (68.00000,0.07926) (69.00000,0.07839) (70.00000,0.07754) (71.00000,0.07671) (72.00000,0.07591) (73.00000,0.07513) (74.00000,0.07437) (75.00000,0.07363) (76.00000,0.07291) (77.00000,0.07221) (78.00000,0.07152) (79.00000,0.07085) (80.00000,0.07020) (81.00000,0.06957) (82.00000,0.06895) (83.00000,0.06834) (84.00000,0.06775) (85.00000,0.06718) (86.00000,0.06661) (87.00000,0.06606) (88.00000,0.06552) (89.00000,0.06499) (90.00000,0.06448) (91.00000,0.06397) (92.00000,0.06348) (93.00000,0.06299) (94.00000,0.06252) (95.00000,0.06205)};
\addlegendentry{Research pays}
\addplot[thick,blue!70!black,dashed,no marks] coordinates {(30.00000,0.08353) (31.00000,0.08353) (32.00000,0.08353) (33.00000,0.08353) (34.00000,0.08353) (35.00000,0.08353) (36.00000,0.08353) (37.00000,0.08353) (38.00000,0.08353) (39.00000,0.08353) (40.00000,0.08353) (41.00000,0.08353) (42.00000,0.08353) (43.00000,0.08353) (44.00000,0.08353) (45.00000,0.08353) (46.00000,0.08353) (47.00000,0.08353) (48.00000,0.08353) (49.00000,0.08353) (50.00000,0.08353) (51.00000,0.08353) (52.00000,0.08353) (53.00000,0.08353) (54.00000,0.08353) (55.00000,0.08353) (56.00000,0.08353) (57.00000,0.08353) (58.00000,0.08353) (59.00000,0.08353) (60.00000,0.08353) (61.00000,0.08353) (62.00000,0.08353) (63.00000,0.08353) (64.00000,0.08353) (65.00000,0.08353) (66.00000,0.08353) (67.00000,0.08353) (68.00000,0.08353) (69.00000,0.08353) (70.00000,0.08353) (71.00000,0.08353) (72.00000,0.08353) (73.00000,0.08353) (74.00000,0.08353) (75.00000,0.08353) (76.00000,0.08353) (77.00000,0.08353) (78.00000,0.08353) (79.00000,0.08353) (80.00000,0.08353) (81.00000,0.08353) (82.00000,0.08353) (83.00000,0.08353) (84.00000,0.08353) (85.00000,0.08353) (86.00000,0.08353) (87.00000,0.08353) (88.00000,0.08353) (89.00000,0.08353) (90.00000,0.08353) (91.00000,0.08353) (92.00000,0.08353) (93.00000,0.08353) (94.00000,0.08353) (95.00000,0.08353)};
\addlegendentry{Customers pay}
\addplot[gray,densely dotted,no marks] coordinates {(63.52005,0.00000) (63.52005,0.17860)};
\end{axis}\end{tikzpicture}\end{minipage}\hfill
\begin{minipage}{.325\textwidth}\begin{tikzpicture}\begin{axis}[width=\linewidth,height=5.4cm,xlabel style={align=center},tick label style={font=\scriptsize},label style={font=\footnotesize},title style={font=\small},grid=major,grid style={gray!15},xlabel={Share of hazard\\after completion (\%)},ylabel={Threshold share (\%)},title={(b) Continuing danger},xmin=0,xmax=100,ymin=62,ymax=68]
\addplot[thick,black,mark=*,mark size=1pt] coordinates {(0.00000,63.52005) (2.26275,63.59193) (4.42536,63.66011) (6.49435,63.72488) (8.47565,63.78650) (10.37473,63.84517) (12.19660,63.90112) (13.94588,63.95452) (15.62682,64.00555) (17.24335,64.05436) (18.79910,64.10109) (31.64856,64.47817) (48.08037,64.93840) (53.65151,65.08915) (58.14290,65.20880) (64.93821,65.38673) (69.83531,65.51270) (73.53210,65.60657) (76.42170,65.67922) (78.74247,65.73712) (80.64732,65.78434) (82.23886,65.82360) (90.25392,66.01850) (95.85944,66.15212)};
\addplot[gray,densely dotted,no marks] coordinates {(0.00000,66.24943) (100.00000,66.24943)};
\end{axis}\end{tikzpicture}\end{minipage}\hfill
\begin{minipage}{.325\textwidth}\begin{tikzpicture}\begin{axis}[width=\linewidth,height=5.4cm,xlabel style={align=center},tick label style={font=\scriptsize},label style={font=\footnotesize},title style={font=\small},grid=major,grid style={gray!15},xlabel={Demand capacity\\(\%)},ylabel={Research share (\%)},title={(c) Best constant rule},xmin=45,xmax=100,ymin=38,ymax=52,legend style={draw=none,font=\scriptsize,at={(0.98,0.03)},anchor=south east}]
\addplot[thick,black,solid,mark=*,mark size=0.9pt] coordinates {(45.00000,40.74770) (50.00000,41.66921) (55.00000,45.35031) (60.00000,48.28507) (65.00000,50.00000) (70.00000,50.00000) (75.00000,50.00000) (80.00000,50.00000) (85.00000,50.00000) (90.00000,50.00000) (95.00000,50.00000) (100.00000,50.00000)};
\addlegendentry{$\xi=0$}
\addplot[thick,blue!70!black,dashed,mark=*,mark size=0.9pt] coordinates {(45.00000,40.74770) (50.00000,40.93365) (55.00000,44.62531) (60.00000,47.57765) (65.00000,49.96172) (70.00000,50.00000) (75.00000,50.00000) (80.00000,50.00000) (85.00000,50.00000) (90.00000,50.00000) (95.00000,50.00000) (100.00000,50.00000)};
\addlegendentry{$\xi=0.25$}
\addplot[thick,red!75!black,densely dotted,mark=*,mark size=0.9pt] coordinates {(45.00000,40.74770) (50.00000,40.59594) (55.00000,44.29466) (60.00000,47.25633) (65.00000,49.65065) (70.00000,50.00000) (75.00000,50.00000) (80.00000,50.00000) (85.00000,50.00000) (90.00000,50.00000) (95.00000,50.00000) (100.00000,50.00000)};
\addlegendentry{$\xi=1$}
\end{axis}\end{tikzpicture}\end{minipage}
\caption{Who should pay for safety compute. (a) Hazard reduction per index unit of service cost when one unit of share moves into safety at the customer-funded optimum (research 50\%, safety \alBoundaryS\%), by the continuation share $i_c$, the customer share a finished frontier provides; the dotted line marks the threshold of Corollary~\ref{cor:threshold}. (b) The threshold as continuing danger $\xi$ grows, plotted against the share of hazard that falls after completion at the same allocation; the dotted line is its limit. (c) Research share of the best constant rule by demand capacity $\bar\imath$, with the continuation of Lemma~\ref{lem:continuation}; at 50\% customers pay for all extra safety. All rules meet the 5\% budget and the 12-month deadline.}\label{fig:source}
\end{figure}

Corollary~\ref{cor:threshold} is local: the threshold is evaluated at an allocation and moves with it. The allocation that matters for a first step away from customer funding is the customer-funded optimum, which takes more compute from customers as the budget grows. Along it, the threshold has a simple form.

\begin{proposition}[The budget and the funding source]\label{prop:budget}
In the constant-share environment, consider allocations that keep research at $x_0$ and leave customers the share $i$, with the rest to safety. Let the conditions of Proposition~\ref{prop:source} hold. Where its root is positive, the threshold of Corollary~\ref{cor:threshold} satisfies
\begin{equation}
\sqrt{\bar\imath^\ast(i)}=(1-\theta_Q)\sqrt i+\theta_C\,\frac{\rho(i)}{\sqrt i},
\qquad \rho(i)=\frac{R_R}{R_C},
\label{eq:budgetthreshold}
\end{equation}
where $\theta_C=c_C\sqrt i/E$ and $\theta_Q=Q_x/(E\sqrt i)$, with $E=\ell(1)e^{\kappa T_0+\mu/2}$, are positive and do not depend on $i$.
\begin{itemize}
\item[(a)] If funding from research cuts hazard, with $\rho(i)\geq\underline\rho>0$ for small $i$, then $\bar\imath^\ast(i)\to\infty$ as $i\to0$. Whatever the continuation, research becomes the cheaper marginal source once the customer share is small enough.
\item[(b)] If $\rho$ is differentiable and the right side of \eqref{eq:budgetthreshold} is positive, $\bar\imath^\ast$ falls with $i$ exactly when ${(1-\theta_Q)\,i}<\theta_C\,[\rho(i)-2i\rho'(i)]$. In particular, it does so whenever $\rho'(i)\leq0$ and ${(1-\theta_Q)\,i}<\theta_C\,\rho(i)$.
\end{itemize}
\end{proposition}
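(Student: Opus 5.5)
The plan is to compute the threshold of Corollary~\ref{cor:threshold} explicitly along the family $x=x_0$, $s=1-x_0-i$, and show that every ingredient except $c_C$, $\rho$ and $\sqrt i$ is independent of $i$. The key observation is that with $x=x_0$ we have $m=1$. Then $t_x(F)=t_0(F)$, $u_x=u_0$, $\ell(F)=\varepsilon z(F)/[gx_0(1+z(F))]$ and $T_x=T_0$, none of which involve $i$. Hence $E=\ell(1)e^{\kappa T_0+\mu/2}$ does not depend on $i$. Likewise $Q_x/\sqrt i=\tfrac\mu2\int_0^1e^{\kappa t_0(F)+\mu F/2}\ell(F)\,dF$ and $J^{\mathrm{pre}}/\sqrt i=\int_0^1e^{\kappa t_0+\mu F/2}/u_0\,dF$ do not depend on $i$. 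Since $c_C=J^{\mathrm{pre}}/(2i)$, the product $c_C\sqrt i=\tfrac12 J^{\mathrm{pre}}/\sqrt i$ is constant. This gives the constancy and positivity of $\theta_C$ and $\theta_Q$, with positivity following from $\ell>0$, $\mu>0$ and positive integrands.

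Next I substitute into \eqref{eq:threshold}. Here $b_T=b(T_0,1,i)=e^{\kappa T_0+\mu/2}\sqrt i$, so $\ell(1)b_T=E\sqrt i$. Then $\pi^\ast=(c_C\rho-Q_x)/(E\sqrt i)$, and
\[
\sqrt i\,(1+\pi^\ast)=\sqrt i+\frac{c_C\rho}{E}-\frac{Q_x}{E}
=\sqrt i+\theta_C\frac{\rho}{\sqrt i}-\theta_Q\sqrt i .
\]
On the branch where this signed root is positive, Corollary~\ref{cor:threshold} gives $\bar\imath^\ast$ as its square, which is \eqref{eq:budgetthreshold}. I note that $R_R/R_C$ is the ratio appearing in Proposition~\ref{prop:source}, which may include continuing-danger terms. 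The derivation is indifferent to that.

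For part (a), $\rho\ge\underline\rho>0$ makes the term $\theta_C\rho/\sqrt i\ge\theta_C\underline\rho/\sqrt i$, which tends to infinity as $i\to0$. The other term, $(1-\theta_Q)\sqrt i$, tends to zero whatever its sign. So the root is eventually positive and diverges, and $\bar\imath^\ast\to\infty$. Any continuation share $\min\{\bar\imath,1-s_0\}\le1$ then lies below it, and research is the cheaper source.

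For part (b), where the root $r(i)$ is positive, $\bar\imath^\ast=r^2$ falls exactly when $r'(i)<0$. Differentiating gives
\[
r'(i)=\frac{1-\theta_Q}{2\sqrt i}+\theta_C\Big[\frac{\rho'}{\sqrt i}-\frac{\rho}{2i^{3/2}}\Big].
\]
Multiplying by $2i^{3/2}>0$ yields the condition $(1-\theta_Q)i<\theta_C(\rho-2i\rho')$. When $\rho'\le0$, the right side is at least $\theta_C\rho$, which gives the sufficient condition.

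The main obstacle is bookkeeping rather than analysis. I must confirm that the endpoint flow $b_T$ and $M$ are evaluated at $T_0$ with the customer share $i$, so that $\ell(1)b_T$ factors as $E\sqrt i$. I must also verify that the feasibility conditions of Proposition~\ref{prop:source} (for example $i<\bar\imath$) hold along the family, which is assumed.
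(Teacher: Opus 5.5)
Your proposal is correct and follows the paper's proof essentially step for step. With $x=x_0$, the capability path, $\ell$, $T_0$ and $E$ do not depend on $i$, while $J^{\mathrm{pre}}$ and $Q_x$ scale with $\sqrt i$. Substituting into \eqref{eq:threshold} gives $\sqrt i+(c_C\rho-Q_x)/E$. Part (a) then follows from the divergence of $\theta_C\rho/\sqrt i$, and part (b) from the same derivative multiplied by $2i^{3/2}$.
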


A larger budget leaves customers less compute at the customer-funded optimum. With square-root service value, the cost of taking one more unit from customers rises like $1/\sqrt i$ as their share shrinks. The quality term of funding from research shrinks like $\sqrt i$, and its endpoint term stays bounded. So part (a) needs no hazard advantage, only that slowing research cuts hazard at all. Part (b) gives the condition under which the threshold rises with the budget. Diminishing returns to services do not by themselves guarantee it; the hazard ratio $\rho$ must not fall too fast as the customer share shrinks. In the calibration, $\theta_C=\alThetaC$ and $1-\theta_Q=\alOneMinusThetaQ$. At the 5 percent optimum, $i=\alRdI$ and $\rho=\alTakeoffRatio$, so \eqref{eq:budgetthreshold} reads $\alRdTermQ+\alRdTermC=\alRdSqrtThreshold$, the square root of the threshold \alRdThreshold. The customer term already carries most of the weight. The ratio $\rho$ rises from \alRhoLow{} at a budget of 1 percent of $\bar J$ to \alRhoHigh{} at 30 percent. With $\rho$ rising, the threshold rises at every budget from 1 to 30 percent (Section~\ref{sec:larger}).

The redeployed case hands customers the compute capability research no longer needs at the moment of completion. Proposition~\ref{prop:source} depends on that assumption only through the value of finishing, so it extends directly to compute that arrives gradually.

\begin{corollary}[Gradual redeployment]\label{cor:gradual}
Let the service horizon cover the expansion below.
\begin{itemize}
\item[(a)] Suppose customers return to $i_0$ at completion, and their use beyond it expands linearly to $i_c>i_0$ over $R>0$ months, the same for every path. At $u$ months after completion, $i_c(u)=i_0+(i_c-i_0)\min\{1,u/R\}$. Then
\[
M(T)=e^{\kappa T+\mu/2}\left[\sqrt{i_0}+\int_0^R e^{\kappa u}\,\frac{d\sqrt{i_c(u)}}{du}\,du\right],
\]
which exceeds $e^{\kappa T+\mu/2}\sqrt{i_c}$, its value under immediate redeployment, whenever $\kappa>0$.
\item[(b)] Suppose instead that demand capacity $\bar\imath(t)$ varies with calendar time, that $i<\bar\imath(t)$ before completion, and that after it customers receive the share $\min\{\bar\imath(t),1-s_0\}$. Then $M(T)=b(T,1,\min\{\bar\imath(T),1-s_0\})$, and Corollary~\ref{cor:threshold} holds with $\bar\imath$ replaced by $\bar\imath(T_x)$.
\end{itemize}
\end{corollary}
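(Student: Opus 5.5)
The plan is to compute the continuation value $V_c(T)$ directly in each case and differentiate it in $T$. The endpoint term of Proposition~\ref{prop:source} depends on the continuation only through $M(T)=-V_c'(T)$. Once $M$ is known, the rest of that proposition and of Corollary~\ref{cor:threshold} carries over unchanged.

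For part (a), I write the continuation value with capability fixed at one. With $u=t-T$ and $\tau$ large enough to cover $T+R$ for all nearby completion dates,
\[
V_c(T)=\int_T^\tau e^{\kappa t+\mu/2}\sqrt{i_c(t-T)}\,dt .
\]
Leibniz's rule gives
\[
-V_c'(T)=e^{\kappa T+\mu/2}\sqrt{i_0}+\int_T^\tau e^{\kappa t+\mu/2}\,\frac{d\sqrt{i_c(u)}}{du}\Big|_{u=t-T}\,dt .
\]
The first term is the flow at the lower limit, where $i_c(0)=i_0$. The second comes from the $T$-dependence of the integrand, because $\partial_T\sqrt{i_c(t-T)}=-\frac{d}{du}\sqrt{i_c}$. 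The derivative of $\sqrt{i_c}$ vanishes for $u>R$, and the kink at $u=R$ has measure zero. Substituting $t=T+u$ then yields the displayed formula.

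For the comparison, I integrate by parts. Since $\int_0^R \frac{d\sqrt{i_c}}{du}\,du=\sqrt{i_c}-\sqrt{i_0}$, we get $\sqrt{i_0}+\int_0^R \frac{d\sqrt{i_c}}{du}\,du=\sqrt{i_c}$. The difference from immediate redeployment is therefore
\[
e^{\kappa T+\mu/2}\int_0^R (e^{\kappa u}-1)\,\frac{d\sqrt{i_c}}{du}\,du .
\]
Here $d\sqrt{i_c}/du>0$ on $(0,R)$ because $i_c>i_0$, and $e^{\kappa u}-1>0$ for $u>0$ when $\kappa>0$. So the difference is strictly positive.

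The intuition is that a later completion shifts the whole expansion later, and the expansion is worth more in discounted-flow terms when $\kappa>0$. This step needs some care. The reference path follows the same rule, but $M$ is a derivative of the policy's own continuation, and $J_{0,\tau}$ is held fixed. So only $V_c'$ enters $c_R$, exactly as in Proposition~\ref{prop:source}.

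For part (b), the continuation flow $b(t,1,\min\{\bar\imath(t),1-s_0\})$ depends on calendar time only, not on $T$. Hence $V_c(T)=\int_T^\tau b(t,1,\cdot)\,dt$ and $M(T)$ is the integrand at $t=T$. This is the argument of Lemma~\ref{lem:continuation} with a time-varying cap. That lemma's optimality claim also carries over pointwise in $t$, because post-completion hazard does not depend on the split. The condition $i<\bar\imath(t)$ before completion keeps $c_C$ and $J^{\mathrm{pre}}$ unchanged.

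The finishing premium becomes $\sqrt{\min\{\bar\imath(T_x),1-s_0\}/i}-1$. Repeating the algebra of Corollary~\ref{cor:threshold} with $b_T$ unchanged gives the threshold with $\bar\imath$ replaced by $\bar\imath(T_x)$. The hazard quantities $R_R$ and $R_C$ are unaffected because $\Phi$ does not involve customers.

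The main obstacle is mild. It is justifying differentiation in part (a) at the kink of $i_c(u)$ and at the endpoint $\tau$. Both are handled by the horizon assumption and the absolute continuity of $\sqrt{i_c}$.
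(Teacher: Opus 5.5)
Your proposal is correct and follows the paper's own proof essentially step for step. You apply Leibniz's rule to $V_c(T)=\int_T^\tau e^{\kappa t+\mu/2}\sqrt{i_c(t-T)}\,dt$ and get the strict inequality in (a) from the positivity of $(e^{\kappa u}-1)\,d\sqrt{i_c}/du$ on $(0,R)$; the step you call integration by parts is just the fundamental theorem of calculus. In (b) you use a $T$-independent integrand, with $i<\bar\imath(t)$ keeping $c_C=J^{\mathrm{pre}}/(2i)$ and $b_T$ unchanged.
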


Finishing earlier brings the whole expansion forward. The discounted value of a unit of compute grows at rate $\kappa$, \alKappaPct{} percent a month in the calibration. So the later steps of a gradual expansion are worth more than an immediate step of the same size, and gradual redeployment makes delay costlier, not cheaper. What can make delay cheap is demand that grows on its own clock: finishing earlier then gives customers only what they can use at that date. Section~\ref{sec:robust} quantifies both.

\section{How the allocation should change over time}\label{sec:timing}

Safety should come early relative to customer use. The argument runs through shadow values, which carry the future into each date's choice. Take a regular optimum of \eqref{eq:planner}.\footnote{A regular optimum is an optimal path with piecewise continuous controls, a nonzero multiplier on hazard, and finitely many junctions with the deadline constraint.} Let $\nu\geq0$ be the multiplier on the service budget, and let $q_F$ and $q_S$ be the costates of capability and knowledge for the minimization of $H-\nu J_\tau$. Write
\begin{equation}
A(t)=\int_t^T\lambda(v)\,dv,\qquad
W(t)=-[q_F(t)+k(F(t))e^{-S(t)}].
\label{eq:shadows}
\end{equation}
$A(t)$ is the hazard still ahead. One more unit of knowledge scales every later hazard down by the same proportion, so the costate equation gives $q_S=-A$ (Appendix~\ref{app:proof}). $W$ is the net value of a capability gain, including later services and the deadline, after the step hazard that the gain incurs. The continuation enters $W$, through the value of finishing, but not $q_S$.

At a date with no binding control restriction, the regulator maximizes
\begin{equation}
W u(t,F,x)+A\sigma(F)y(s)+\nu b(t,F,i)
\quad\text{over}\quad x+s+i=1.
\label{eq:pointwise}
\end{equation}
The three terms are the values of current research, current safety learning, and current services. If all three uses are interior,
\begin{equation}
W u_x=A\sigma(F)y'(s)=\nu b_i.
\label{eq:marginals}
\end{equation}
For a small increase in safety, research is the cheaper source if $Wu_x<\nu b_i$, and customers are cheaper otherwise. This is the pointwise counterpart of Proposition~\ref{prop:source}, with the shadow values carrying the future.

\begin{proposition}[Allocation at given shadow values]\label{prop:allocation}
Suppose $\varepsilon=\alpha=1/2$, $W,A,\nu>0$, all three uses are unrestricted within the simplex, and the demand cap does not bind at the optimum. Define
\begin{equation}
b_R=\frac{WA_0e^{gt+\rho F}}{\sqrt{x_0}},\qquad
b_S=\frac{A\sigma(F)(1-\psi)}{\sqrt{s_0}},\qquad
b_I=\nu e^{\kappa t+\mu F/2}.
\label{eq:coefficients}
\end{equation}
The unique pointwise maximizing shares are
\begin{equation}
(x,s,i)=\frac{(b_R^2,b_S^2,b_I^2)}{b_R^2+b_S^2+b_I^2}.
\label{eq:shares}
\end{equation}
This requires $b_I^2/(b_R^2+b_S^2+b_I^2)<\bar\imath$. If the cap binds, customers receive $\bar\imath$, and research and safety divide $1-\bar\imath$ in proportion to $b_R^2$ and $b_S^2$. If research's coefficient is nonpositive and the others are positive, the pointwise optimum assigns zero to research. An active deadline constraint or a lower bound on research can force a positive share.
\end{proposition}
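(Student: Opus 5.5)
With $\varepsilon=\alpha=1/2$, the pointwise objective \eqref{eq:pointwise} is, up to terms that do not depend on the shares, a sum of square roots. Substituting \eqref{eq:capability}, \eqref{eq:knowledge} and \eqref{eq:flow}:
\[
Wu=b_R\sqrt x,\qquad A\sigma(F)y(s)=A\sigma(F)\psi+b_S\sqrt s,\qquad \nu b=b_I\sqrt{\min\{i,\bar\imath\}}.
\]
The first uses $u=A_0e^{gt+\rho F}\sqrt{x/x_0}$. The second uses $y(s)=\psi+(1-\psi)\sqrt{s/s_0}$. The third uses $e^{-rt}\sqrt{Q}=e^{\kappa t+\mu F/2}\sqrt{\min\{i,\bar\imath\}}$. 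The term $A\sigma\psi$ does not depend on $s$ and drops out. So the task is to maximize $G(x,s,i)=b_R\sqrt x+b_S\sqrt s+b_I\sqrt{\min\{i,\bar\imath\}}$ over the simplex \eqref{eq:resource}.

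When all coefficients are positive and the cap is slack, $G=b_R\sqrt x+b_S\sqrt s+b_I\sqrt i$ is strictly concave on the simplex. Any stationary point in the relative interior is therefore the unique maximizer. Infinite marginal value at zero rules out corners, so the maximizer is interior. The Lagrange condition $b_j/(2\sqrt{x_j})=\lambda$ gives $x_j=b_j^2/(4\lambda^2)$. Normalizing to sum one yields \eqref{eq:shares}, and this also matches \eqref{eq:marginals}.

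Removing the cap can only raise the value of $G$. The uncapped solution is therefore also optimal for the capped problem exactly when it is feasible, that is, when $b_I^2/\sum b_j^2<\bar\imath$. If instead that share is at least $\bar\imath$, I would argue that $i=\bar\imath$ at the optimum. Any $i>\bar\imath$ wastes compute, since moving the excess to research or safety strictly raises $G$. On $i\le\bar\imath$ the function $G$ is concave, and the unconstrained stationary point lies outside the feasible set. By concavity along the segment toward it, the constraint binds. Given $i=\bar\imath$, the same Lagrange step applied to $b_R\sqrt x+b_S\sqrt s$ on $x+s=1-\bar\imath$ gives shares proportional to $b_R^2$ and $b_S^2$.

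If $b_R\le0$ while $b_S,b_I>0$, the research term is nonpositive and nonincreasing in $x$. Shifting any research share to safety weakly raises the safety term and weakly raises $G$, and strictly raises it when $b_R<0$ or $s$ is interior. So $x=0$ at the pointwise optimum. The closing caveat about deadlines and lower bounds is a remark that the pointwise problem ignores state constraints, which the proof does not need to establish.

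The only delicate step is the capped case: one has to show the cap binds rather than leaving slack, and the concavity argument along the segment to the unconstrained optimum handles this.
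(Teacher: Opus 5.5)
Your proof is correct, but it takes a different route from the paper. The paper settles the uncapped case in one line with Cauchy--Schwarz:
\[
b_R\sqrt x+b_S\sqrt s+b_I\sqrt i\le\sqrt{b_R^2+b_S^2+b_I^2}\,\sqrt{x+s+i},
\]
with equality only when $\sqrt{x_j}$ is proportional to $b_j$. That one inequality gives the maximal value, the maximizer, and its uniqueness at once, with no separate check of boundary points or concavity.

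You instead combine three steps: strict concavity, the infinite marginal value at zero to exclude boundary points (all of them, not only corners), and the Lagrange condition. This is longer, but it does not rely on the square root pairing with a Euclidean norm. It therefore carries over to any common curvature exponent, the case the appendix notes is what keeps the safety/customer ratio explicit.

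Your treatment of the cap is more complete than the paper's, whose proof states the capped allocation without argument. Three facts together show that the cap binds:
\begin{itemize}
\item the pointwise bound $\sqrt{\min\{i,\bar\imath\}}\le\sqrt i$;
\item the waste argument for $i>\bar\imath$;
\item concavity on $\{i\le\bar\imath\}$.
\end{itemize}
The two-use step then splits $1-\bar\imath$ in proportion to $b_R^2$ and $b_S^2$; Lagrange or Cauchy--Schwarz works equally well here.

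One small tightening: in the case $b_R\le0$ your qualifier is unnecessary. Moving any positive research share to safety strictly raises $b_S\sqrt s$ because $b_S>0$, so $x=0$ follows with no condition on $s$.
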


\begin{proof}
Up to the term $A\sigma\psi$, the pointwise objective is $b_R\sqrt{x}+b_S\sqrt{s}+b_I\sqrt{i}$. Cauchy--Schwarz bounds it by $\sqrt{b_R^2+b_S^2+b_I^2}$, with equality only at \eqref{eq:shares}. A nonpositive research coefficient cannot improve on assigning that share to a positively valued use.
\end{proof}

Each use receives a share proportional to the square of its marginal value. With square-root technologies, the first units of any use are worth the most, so every use with positive value receives some compute. For example, coefficients in the ratio $2:1:1$ for research, safety, and customers give the shares $4/6$, $1/6$, and $1/6$. The formula takes the shadow values as given; the deadline, the endpoint, and the service budget determine them. So it is not a closed form for the whole optimal path. Still, it yields a timing result that does not require knowing $W$.

\begin{proposition}[Relative front-loading of safety]\label{prop:frontload}
Assume the baseline knowledge law, square-root safety and service technologies, and a binding service budget ($\nu>0$). Consider an interval of a regular optimum on which safety and customer shares are positive, $i<\bar\imath$, and neither margin is otherwise constrained. Then
\begin{equation}
\frac{s(t)}{i(t)}=
\left[\frac{A(t)\sigma(F(t))(1-\psi)}
{\sqrt{s_0}\,\nu e^{\kappa t+\mu F(t)/2}}\right]^2,
\label{eq:ratio}
\end{equation}
and, for differentiable positive $\sigma$ and $A$,
\begin{equation}
\frac{d}{dt}\log\frac{s}{i}
=2\frac{d}{dt}\log\sigma(F(t))-
  2\frac{\lambda(t)}{A(t)}-\gamma+2r-\mu\dot F(t).
\label{eq:ratioslope}
\end{equation}
If $\sigma$ is flat, $\mu\geq0$, and $\gamma\geq2r$, the ratio is nonincreasing, and strictly decreasing whenever hazard is being incurred or service quality rises. Rising productivity reverses the direction exactly when
\begin{equation}
\frac{d}{dt}\log\sigma(F(t))>
\frac{\lambda(t)}{A(t)}+\frac\gamma2-r+\frac\mu2\dot F(t).
\label{eq:risethreshold}
\end{equation}
\end{proposition}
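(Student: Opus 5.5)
The plan is to read the ratio off the pointwise first-order conditions \eqref{eq:marginals}, using only the safety and customer margins, and then differentiate in time using the costate of knowledge. On the interval in question, safety and customer shares are positive, $i<\bar\imath$, and neither margin is otherwise constrained. The pointwise maximization of \eqref{eq:pointwise} therefore equates the marginal value of safety to that of customers. This holds whether or not research is interior, since the research share may be pinned by the deadline or a bound. With $\alpha=1/2$, $y'(s)=(1-\psi)/(2\sqrt{s_0 s})$. From \eqref{eq:flow}, $b_i=e^{\kappa t+\mu F/2}/(2\sqrt i)$. Equating $A\sigma y'(s)=\nu b_i$ and squaring gives \eqref{eq:ratio} directly; the factors of $2$ cancel. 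The one point to justify is that the maximum principle applies with $q_S=-A$. I would take this from the costate equation: $\dot q_S=-\partial\mathcal H/\partial S=\lambda$ (hazard is $\propto e^{-S}$), with transversality $q_S(T)=0$ in the baseline $\xi=0$. That yields $q_S(t)=-\int_t^T\lambda$. Hence the coefficient on $\sigma y$ in the maximized expression is $A$, matching \eqref{eq:pointwise}.

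Next, I would take logs of \eqref{eq:ratio} and differentiate. This gives $2\,\tfrac{d}{dt}\log\sigma(F)+2\dot A/A-2\kappa-\mu\dot F$, because $\nu$ is a constant multiplier and $s_0,\psi$ are constants. Using $\dot A=-\lambda$ and $2\kappa=\gamma-2r$ produces \eqref{eq:ratioslope}. Differentiability of $A$ follows from continuity of $\lambda$ on the interval, since controls are piecewise continuous; at finitely many switching points one argues on each piece.

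For the sign claims, take $\sigma$ flat, so the first term vanishes. The remaining terms $-2\lambda/A$, $-(\gamma-2r)$ and $-\mu\dot F$ are each nonpositive, because $\lambda\ge0$, $A>0$, $\gamma\ge2r$, $\mu\ge0$ and $\dot F\ge0$. The slope is strictly negative when $\lambda>0$ (hazard being incurred) or when $\mu\dot F>0$ (quality rising). The reversal condition \eqref{eq:risethreshold} is just \eqref{eq:ratioslope} $>0$ divided by $2$.

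The main obstacle is the costate step. Stating the optimal-control problem precisely requires care because of the state constraint (the deadline at every $F$) and the free terminal time $T$. One must check that the deadline multiplier enters only $q_F$, and so $W$, and not $q_S$. I would handle this by noting that the deadline constraint involves $t$ and $F$ but not $S$. The jump conditions at junctions therefore affect only $q_F$, and $q_S$ remains continuous with $\dot q_S=\lambda$ and $q_S(T)=0$. The footnote's regularity assumptions (nonzero hazard multiplier, finitely many junctions) are used exactly here.
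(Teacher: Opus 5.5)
Your proposal is correct and follows the paper's own argument. You equate the safety and customer margins, which holds even with research at a bound. You take $q_S=-A$ from the costate equation with $q_S(T)=0$, noting that the deadline does not involve $S$. Then, with $\nu$ constant, you log-differentiate using $\dot A=-\lambda$ and read the signs off the result.
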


\begin{proof}
Equal marginal values for safety and customers imply \eqref{eq:ratio}, even if research is at a bound. Along the optimal path, $\dot A=-\lambda$ and the service-budget multiplier is constant. Log differentiation yields \eqref{eq:ratioslope}; its sign gives the remaining statements.
\end{proof}

Three forces push safety early relative to customer use. Knowledge bought early protects more of the remaining hazard, so its value falls at rate $\lambda/A$. A unit of customer share serves more as installed compute grows, at rate $\gamma/2$, and as quality rises, at rate $\mu\dot F/2$. Only discounting, at rate $r$, pushes the other way.

The direction rests on three modeling choices, and each matters for how the result compares with the literature. First, the service budget prices services the same at every date. The multiplier $\nu$ is constant, so the marginal value of a service unit does not fall as the service flow grows. When a richer society values marginal consumption less, growth instead raises the value of safety relative to consumption \citep{jones2016,aschenbrenner2020,trammell2026}. With such a valuation, the direction of $s/i$ also depends on how the customer share moves (Appendix~\ref{app:proof}). Second, danger ends with the takeoff, so remaining hazard $A$ shrinks to zero. This part of the result does not depend on how services are valued. Third, the condition $\gamma\geq2r$ reflects share-based learning: the output of a safety share does not grow with installed compute, while the value of a customer share does. When the compute-using part of safety research scales with the square root of installed compute, the compute-growth terms cancel (Corollary~\ref{cor:extension} in Appendix~\ref{app:proof}).

Condition \eqref{eq:risethreshold} has a simple necessary form. Since $\lambda/A>0$ and $\mu\dot F\geq0$, a reversal requires $d\log\sigma/dt>\gamma/2-r$. In the calibration that is $\log2/12-0.03/12$, or \alKappaPct{} percent a month. If productivity rises with capability as $\sigma(F)=\sigma_0e^{\zeta F}$, the condition reads $(\zeta-\mu/2)\dot F>\lambda/A+\gamma/2-r$. So $\zeta>\mu/2=\alMuHalf$ is necessary: safety productivity must rise more than $e^{\mu/2}=\alSqrtTen$-fold across the takeoff. The threefold rise in the uncertainty exercise ($\zeta=\alRiseLog$) cannot reverse the decline at any date.

Proposition~\ref{prop:frontload} restricts the ratio $s/i$, not the safety share alone, because research can move at the same time. It applies on intervals where both margins are interior, not across a binding safety floor, a saturated customer margin, or a junction with the deadline constraint. With knowledge that loses value at stronger capabilities, $-q_S$ becomes transfer-weighted remaining hazard, and Corollary~\ref{cor:extension} gives the modified condition. With continuing danger, $q_S(T)=-\Phi e^{-S_T}$, so $A$ also counts the hazard that remains after completion. That slows the decline but, with flat productivity and $\gamma\geq2r$, does not reverse it.

\section{Who pays and when: known technology}\label{sec:quantitative}

\subsection{A conditional scenario}

The calibration takes its risk judgments from the AI Futures Project's worked example \citep{aifutures2026}. The reference starts at an Automated Coder milestone, reaches a top-expert-dominating endpoint in 18 months, and carries a 75 percent probability of loss of control. Safety productivity is fitted so that a 12-month pause at the initial capability would cut that probability to 40 percent. During the pause the compute clock stops and safety keeps its reference share. Assigning the 40 percent judgment to that pause is a modeling choice, and it makes safety research productive. Half of reference hazard is state risk, and three quarters of it falls in the second half of capability development. Compute doubles every six months, and model quality raises both research speed and services per unit of inference tenfold by the endpoint. Reference shares are 50 percent research, 5 percent safety, and 45 percent customers; the annual discount rate is 3 percent. Appendix~\ref{app:calibration} gives the formulas. The hazard functions are built once and held fixed throughout; Section~\ref{sec:robust} varies their construction.

Policies satisfy $0.02\leq x\leq0.50$, $0.05\leq s\leq0.70$, and $i\geq0.01$: research cannot exceed its reference share, and safety cannot fall below its reference share. The budget unit is $\bar J=\alUnit$ index units. Unless stated otherwise, rules in this section meet a service budget of 5 percent of $\bar J$; all meet the 12-month deadline. In the two consistent continuations the deadline is slack. No rule in this section uses more than \alMaxSectionDelay{} months of delay there, so the service budget does the work.

Service cost counts 60 months. That covers every transitional difference, because the deadline forces completion by month 30, and any later horizon gives the same absolute cost. With reference shares throughout, the reference's 60-month value is \alJlong, \alRatio{} times $\bar J$, so the 5 percent budget equals \alEquivalentLong{} percent of five-year reference services. Stating the same 5 percent at five years would multiply the permitted sacrifice by \alRatio.

\subsection{Timing paid by customers}

Customer-funded safety is the same in every continuation. With research at 50 percent, every capability arrives on the reference date, so the continuation cancels from the rule's service cost. At a constant share, the largest affordable safety share is \alBoundaryS{} percent, and the probability of catastrophe falls from \alFullRisk{} to \alBoundaryRisk{} percent.

Timing that customers pay for buys much more (Table~\ref{tab:classes}, panel A). Customer-funded rules lower the probability of catastrophe to \alCFStageRisk{} percent with two stages, and to \alCFSixRisk{} and \alCFTwelveRisk{} percent with six and twelve capability segments. The twelve-segment rule gains a further \alCFTimingGain{} points over the constant rule, with no delay in either continuation. It puts more than a third of compute into safety at the start. It then lowers the share in every segment until the share reaches its floor at $F=\alCFTwelveFloorF$ (Figure~\ref{fig:paths}). Its ratio of safety to customer compute falls throughout, the pattern of Proposition~\ref{prop:frontload}; research sits at its upper bound, which the proposition allows. Early safety is cheap in services because installed compute and model quality are low, while it protects almost all of the takeoff's hazard. These rules come from a restricted class, so they illustrate the timing result rather than test its interior condition.

At this budget, allowing research cuts adds little. Under redeployed compute, the best rules found in every class keep research at or near 50 percent. The best constant and two-stage rules are customer-funded. The twelve-segment rule trims research by less than two percentage points from $F=\alRedTwelveCutFrom$ on, which gains \alRedTwelveAdd{} points. Under saturated demand, cutting research adds \alSatConstGain{} points to the constant rule, \alSatStageAdd{} to the two-stage rule, and \alSatTwelveAdd{} to the twelve-segment rule. So at this budget, among the tested policy classes, customer-funded timing is nearly all that the joint allocation achieves, in either continuation.

\begin{table}[htbp]\centering\small
\caption{Customer-funded rules and rules that may also cut research}\label{tab:classes}
\setlength{\tabcolsep}{5pt}
\begin{tabular}{@{}lrrrrr@{}}\toprule
& Customer-funded & \multicolumn{4}{c}{Research may also be cut}\\
\cmidrule(l){3-6}
& both continuations & \multicolumn{2}{c}{Redeployed compute} & \multicolumn{2}{c}{Saturated demand}\\
\cmidrule(lr){3-4}\cmidrule(l){5-6}
Budget and policy class & Risk (\%) & Risk (\%) & Delay & Risk (\%) & Delay\\\midrule
\addlinespace[3pt]\multicolumn{6}{@{}l}{\emph{A. Budget 5\% of $\bar J$}}\\
\quad Constant shares & 57.68 & 57.68 & 0.00 & 54.83 & 1.17\\
\quad Two stages & 52.14 & 52.14 & 0.00 & 51.06 & 0.46\\
\quad Six segments & 51.14 & 51.14 & 0.01 & 50.26 & 0.50\\
\quad Twelve segments & 51.01 & 51.01 & 0.01 & 50.10 & 0.49\\
\addlinespace[3pt]\multicolumn{6}{@{}l}{\emph{B. Budget 15\% of $\bar J$}}\\
\quad Constant shares & 40.91 & 40.84 & 0.19 & 34.08 & 3.29\\
\quad Twelve segments & 35.97 & 33.94 & 1.04 & 28.18 & 2.34\\
\addlinespace[3pt]\multicolumn{6}{@{}l}{\emph{C. Budget 30\% of $\bar J$}}\\
\quad Constant shares & 31.20 & 27.89 & 1.62 & 19.99 & 5.50\\
\quad Twelve segments & 30.03 & 21.11 & 2.82 & 14.50 & 5.06\\
\bottomrule\end{tabular}
\par\smallskip\begin{minipage}{\textwidth}\footnotesize
Risk is the probability of catastrophe; full speed gives 75.00\%. Delay is the largest milestone delay in months. Customer-funded rules keep research at 50\% throughout, so they delay no capability and give the same risk and service cost in both continuations. Two-stage rules switch shares once, at a milestone chosen from $\{0.05,0.1,0.2,\ldots,0.9\}$; segment rules hold shares constant between capability milestones at 0.1, 0.2, 0.3, 0.5, and 0.75 (six segments) or 0.05, 0.1, 0.15, 0.2, 0.25, 0.3, 0.4, 0.5, 0.6, 0.7, and 0.85 (twelve). Every rule meets the budget and the 12-month deadline. Entries other than full speed and the constant customer-funded rules are feasible multistart candidates within their policy classes, not certified global optima.
\end{minipage}\end{table}

\begin{figure}[htbp]\centering
\begin{minipage}{.325\textwidth}\begin{tikzpicture}\begin{axis}[width=\linewidth,height=5.0cm,tick label style={font=\scriptsize},label style={font=\small},title style={font=\small},grid=major,grid style={gray!15},xlabel={Capability},xmin=0,xmax=1,ymin=0,ymax=60,ylabel={Share (\%)},title={(a) Safety}]
\addplot[thick,black,solid,no marks] coordinates {(0.00000,5.00000) (0.05000,5.00000) (0.05000,5.00000) (0.10000,5.00000) (0.10000,5.00000) (0.15000,5.00000) (0.15000,5.00000) (0.20000,5.00000) (0.20000,5.00000) (0.25000,5.00000) (0.25000,5.00000) (0.30000,5.00000) (0.30000,5.00000) (0.40000,5.00000) (0.40000,5.00000) (0.50000,5.00000) (0.50000,5.00000) (0.60000,5.00000) (0.60000,5.00000) (0.70000,5.00000) (0.70000,5.00000) (0.85000,5.00000) (0.85000,5.00000) (1.00000,5.00000)};
\addplot[thick,blue!70!black,dashed,no marks] coordinates {(0.00000,35.41686) (0.05000,35.41686) (0.05000,29.82658) (0.10000,29.82658) (0.10000,24.68809) (0.15000,24.68809) (0.15000,20.19919) (0.20000,20.19919) (0.20000,16.39459) (0.25000,16.39459) (0.25000,13.22452) (0.30000,13.22452) (0.30000,9.55239) (0.40000,9.55239) (0.40000,5.96554) (0.50000,5.96554) (0.50000,5.00000) (0.60000,5.00000) (0.60000,5.00000) (0.70000,5.00000) (0.70000,5.00000) (0.85000,5.00000) (0.85000,5.00000) (1.00000,5.00000)};
\addplot[thick,green!45!black,dashdotted,no marks] coordinates {(0.00000,36.02167) (0.05000,36.02167) (0.05000,26.51310) (0.10000,26.51310) (0.10000,20.45103) (0.15000,20.45103) (0.15000,16.25719) (0.20000,16.25719) (0.20000,13.17506) (0.25000,13.17506) (0.25000,11.07132) (0.30000,11.07132) (0.30000,9.68749) (0.40000,9.68749) (0.40000,8.20349) (0.50000,8.20349) (0.50000,6.77757) (0.60000,6.77757) (0.60000,5.07230) (0.70000,5.07230) (0.70000,5.00000) (0.85000,5.00000) (0.85000,5.00000) (1.00000,5.00000)};
\end{axis}\end{tikzpicture}\end{minipage}\hfill
\begin{minipage}{.325\textwidth}\begin{tikzpicture}\begin{axis}[width=\linewidth,height=5.0cm,tick label style={font=\scriptsize},label style={font=\small},title style={font=\small},grid=major,grid style={gray!15},xlabel={Capability},xmin=0,xmax=1,ymin=0,ymax=60,ylabel={Share (\%)},title={(b) Research}]
\addplot[thick,black,solid,no marks] coordinates {(0.00000,50.00000) (0.05000,50.00000) (0.05000,50.00000) (0.10000,50.00000) (0.10000,50.00000) (0.15000,50.00000) (0.15000,50.00000) (0.20000,50.00000) (0.20000,50.00000) (0.25000,50.00000) (0.25000,50.00000) (0.30000,50.00000) (0.30000,50.00000) (0.40000,50.00000) (0.40000,50.00000) (0.50000,50.00000) (0.50000,50.00000) (0.60000,50.00000) (0.60000,50.00000) (0.70000,50.00000) (0.70000,50.00000) (0.85000,50.00000) (0.85000,50.00000) (1.00000,50.00000)};
\addplot[thick,blue!70!black,dashed,no marks] coordinates {(0.00000,50.00000) (0.05000,50.00000) (0.05000,50.00000) (0.10000,50.00000) (0.10000,50.00000) (0.15000,50.00000) (0.15000,50.00000) (0.20000,50.00000) (0.20000,50.00000) (0.25000,50.00000) (0.25000,50.00000) (0.30000,50.00000) (0.30000,50.00000) (0.40000,50.00000) (0.40000,50.00000) (0.50000,50.00000) (0.50000,50.00000) (0.60000,50.00000) (0.60000,50.00000) (0.70000,50.00000) (0.70000,50.00000) (0.85000,50.00000) (0.85000,50.00000) (1.00000,50.00000)};
\addplot[thick,green!45!black,dashdotted,no marks] coordinates {(0.00000,41.88588) (0.05000,41.88588) (0.05000,46.26082) (0.10000,46.26082) (0.10000,47.61412) (0.15000,47.61412) (0.15000,47.15810) (0.20000,47.15810) (0.20000,45.47391) (0.25000,45.47391) (0.25000,43.92868) (0.30000,43.92868) (0.30000,45.31251) (0.40000,45.31251) (0.40000,46.79651) (0.50000,46.79651) (0.50000,48.22243) (0.60000,48.22243) (0.60000,49.92770) (0.70000,49.92770) (0.70000,50.00000) (0.85000,50.00000) (0.85000,50.00000) (1.00000,50.00000)};
\end{axis}\end{tikzpicture}\end{minipage}\hfill
\begin{minipage}{.325\textwidth}\begin{tikzpicture}\begin{axis}[width=\linewidth,height=5.0cm,tick label style={font=\scriptsize},label style={font=\small},title style={font=\small},grid=major,grid style={gray!15},xlabel={Capability},xmin=0,xmax=1,ymin=0,ymax=60,ylabel={Share (\%)},title={(c) Customers},legend to name=pathlegend,legend columns=3,legend style={draw=none,font=\small,/tikz/every even column/.append style={column sep=8pt}}]
\addplot[thick,black,solid,no marks] coordinates {(0.00000,45.00000) (0.05000,45.00000) (0.05000,45.00000) (0.10000,45.00000) (0.10000,45.00000) (0.15000,45.00000) (0.15000,45.00000) (0.20000,45.00000) (0.20000,45.00000) (0.25000,45.00000) (0.25000,45.00000) (0.30000,45.00000) (0.30000,45.00000) (0.40000,45.00000) (0.40000,45.00000) (0.50000,45.00000) (0.50000,45.00000) (0.60000,45.00000) (0.60000,45.00000) (0.70000,45.00000) (0.70000,45.00000) (0.85000,45.00000) (0.85000,45.00000) (1.00000,45.00000)};
\addlegendentry{Reference}
\addplot[thick,blue!70!black,dashed,no marks] coordinates {(0.00000,14.58314) (0.05000,14.58314) (0.05000,20.17342) (0.10000,20.17342) (0.10000,25.31191) (0.15000,25.31191) (0.15000,29.80081) (0.20000,29.80081) (0.20000,33.60541) (0.25000,33.60541) (0.25000,36.77548) (0.30000,36.77548) (0.30000,40.44761) (0.40000,40.44761) (0.40000,44.03446) (0.50000,44.03446) (0.50000,45.00000) (0.60000,45.00000) (0.60000,45.00000) (0.70000,45.00000) (0.70000,45.00000) (0.85000,45.00000) (0.85000,45.00000) (1.00000,45.00000)};
\addlegendentry{Customer-funded, both continuations}
\addplot[thick,green!45!black,dashdotted,no marks] coordinates {(0.00000,22.09245) (0.05000,22.09245) (0.05000,27.22608) (0.10000,27.22608) (0.10000,31.93485) (0.15000,31.93485) (0.15000,36.58472) (0.20000,36.58472) (0.20000,41.35102) (0.25000,41.35102) (0.25000,45.00000) (0.30000,45.00000) (0.30000,45.00000) (0.40000,45.00000) (0.40000,45.00000) (0.50000,45.00000) (0.50000,45.00000) (0.60000,45.00000) (0.60000,45.00000) (0.70000,45.00000) (0.70000,45.00000) (0.85000,45.00000) (0.85000,45.00000) (1.00000,45.00000)};
\addlegendentry{Saturated demand, research may be cut}
\end{axis}\end{tikzpicture}\end{minipage}
\par\smallskip\ref{pathlegend}
\caption{Best twelve-segment rules under the 5\% budget and the 12-month deadline, by capability. The customer-funded rule is the same in both continuations and delays no capability. Under redeployed compute, the best rule that may also cut research is not shown: it trims research by less than two percentage points from $F=\alRedTwelveCutFrom$ on and otherwise differs by less than 0.1 percentage points in any share. Under saturated demand customers use at most 45\%.}\label{fig:paths}
\end{figure}

\subsection{When research should also pay}

\begin{table}[htbp]\centering\small
\caption{Known technology: who pays for safety compute, and when}\label{tab:known}
\setlength{\tabcolsep}{4pt}
\begin{tabular}{@{}lrrrrrrr@{}}\toprule
Policy & Research & Safety & Customers & Risk & Cost & Delay & Switch\\
& (\%) & (\%) & (\%) & (\%) & (\%) & (months) & ($F_c$)\\\midrule
Full speed & 50.0 & 5.0 & 45.0 & 75.00 & 0.00 & 0.00 & \\
\addlinespace[3pt]\multicolumn{8}{@{}l}{\emph{A. Redeployed compute}}\\
\quad Customer-funded safety & 50.0 & 14.8 & 35.2 & 57.68 & 5.00 & 0.00 & \\
\quad Best constant rule & 50.0 & 14.8 & 35.2 & 57.68 & 5.00 & 0.00 & \\
\quad Best two-stage rule & 50.0/50.0 & 25.8/5.0 & 24.2/45.0 & 52.14 & 5.00 & 0.00 & 0.3\\
\addlinespace[3pt]\multicolumn{8}{@{}l}{\emph{B. Saturated demand}}\\
\quad Customer-funded safety & 50.0 & 14.8 & 35.2 & 57.68 & 5.00 & 0.00 & \\
\quad Best constant rule & 40.7 & 14.3 & 45.0 & 54.83 & 5.00 & 1.17 & \\
\quad Best two-stage rule & 44.6/47.2 & 27.2/7.8 & 28.2/45.0 & 51.06 & 5.00 & 0.46 & 0.2\\
\addlinespace[3pt]\multicolumn{8}{@{}l}{\emph{C. Reference reset}}\\
\quad Customer-funded safety & 50.0 & 14.8 & 35.2 & 57.68 & 5.00 & 0.00 & \\
\quad Best constant rule & 37.0 & 13.4 & 49.6 & 54.51 & 5.00 & 1.73 & \\
\quad Best two-stage rule & 32.3/2.0 & 34.0/5.0 & 33.7/93.0 & 38.60 & 5.00 & 9.60 & 0.5\\
\bottomrule\end{tabular}
\par\smallskip\begin{minipage}{\textwidth}\footnotesize
Risk is the probability of catastrophe. Cost is discounted service loss over 60 months, in percent of the budget unit $\bar J$; every intervention row meets the same 5\% budget and the 12-month deadline. Two-stage entries show shares before/after the switch milestone $F_c$, chosen from $\{0.05,0.1,0.2,\ldots,0.9\}$. Research is bounded below by 2\% and safety by 5\%. The continuation cases differ only in what customers can use and what they receive after completion (Section~\ref{sec:model}); Appendix~\ref{app:reset} explains panel C. Rows other than full speed and customer-funded safety are feasible multistart candidates within their policy classes, not certified global optima.
\end{minipage}\end{table}

The continuation decides who pays, as Corollary~\ref{cor:threshold} predicts (Table~\ref{tab:known}). Under redeployed compute, customers pay: the best constant rule is customer-funded safety itself. Delay would forgo what a finished frontier serving \alRedeployedShare{} percent of compute adds over the slowed one, so funding from research costs more per unit of hazard reduced.

Under saturated demand, research pays. The best constant rule keeps customers at 45 percent and moves research to \alSatConstX{} percent and safety to \alSatConstS{} percent. That cuts the probability of catastrophe by \alSatConstGain{} points to \alSatConstRisk{} percent. Customers could not use more compute, and a finished frontier would serve them no more than the slowed one does, so delay costs only quality. Panel (c) of Figure~\ref{fig:source} shows the best constant rule switching between the two answers as demand capacity passes the threshold.

Dynamic rules use both sources. Under saturated demand, the best two-stage rule switches at $F_c=\alSatStageSwitch$. Before the switch, customers give up most of the extra safety compute and research the rest; after it, safety falls to \alSatStageLateS{} percent. The twelve-segment rule cuts research most at the start of the takeoff and not at all from $F=\alSatTwelveFullFrom$, with a largest milestone delay of \alSatTwelveDelay{} months. Under redeployed compute, the best two-stage rule is customer-funded (Table~\ref{tab:switch} reports every switch milestone). Across demand capacity (Table~\ref{tab:capacity}), the best two-stage rule beats the best constant rule by \alCapStageGainMin{} to \alCapStageGainMax{} points. From a demand capacity of \alCapNoCut{} percent it never cuts research. Panel C of Table~\ref{tab:known} reports the reference reset, whose gain Appendix~\ref{app:reset} traces to the continuation.

\subsection{Larger budgets: research pays too}\label{sec:larger}

The threshold of Corollary~\ref{cor:threshold} rises steeply with the budget, as Proposition~\ref{prop:budget} predicts (Figure~\ref{fig:budget}, panel (a)). It is \alThreshold{} percent at a 5 percent budget and \alThresholdFifteen{} percent at 15 percent. The customer share at the customer-funded optimum falls from \alBoundaryIFive{} percent at a 5 percent budget to \alBoundaryIThirty{} percent at 30 percent. As it falls, the customer term of \eqref{eq:budgetthreshold} grows. The threshold reaches the redeployed continuation share of 95 percent at a budget of \alCrossBudget{} percent, when customers keep \alCrossShare{} percent of compute. Above that budget, research is the cheaper marginal source even when a finished frontier gives customers all freed compute. The rise does not rest on a hazard advantage. At a tenth of baseline productivity, slowing research cuts less hazard than customer funding ($\rho$ between \alWeakRhoLow{} and \alWeakRhoHigh{}). Yet the threshold still rises, from \alWeakThreshFive{} percent at a 5 percent budget to \alWeakThreshThirty{} percent at 30 percent. A threshold above 100 percent means research is the cheaper source whatever customers can use, since demand capacity cannot exceed 100 percent.

\begin{figure}[htbp]\centering
\begin{minipage}{.325\textwidth}\begin{tikzpicture}\begin{axis}[width=\linewidth,height=5.4cm,xlabel style={align=center},tick label style={font=\scriptsize},label style={font=\footnotesize},title style={font=\small},grid=major,grid style={gray!15},xlabel={Service budget\\(\% of $\bar J$)},ylabel={Threshold share (\%)},title={(a) Threshold by budget},xmin=0,xmax=30,ymin=0,ymax=160]
\addplot[thick,black,no marks] coordinates {(1.00000,53.93917) (2.00000,56.11353) (3.00000,58.41961) (4.00000,60.88090) (5.00000,63.52005) (6.00000,66.36063) (7.00000,69.42815) (8.00000,72.75092) (9.00000,76.36083) (10.00000,80.29421) (11.00000,84.59273) (12.00000,89.30457) (13.00000,94.48576) (14.00000,100.20185) (15.00000,106.53003) (16.00000,113.56166) (17.00000,121.40568) (18.00000,130.19275) (19.00000,140.08079) (20.00000,151.26205) (21.00000,163.97256) (22.00000,178.50467) (23.00000,195.22396) (24.00000,214.59248) (25.00000,237.20099) (26.00000,263.81451) (27.00000,295.43788) (28.00000,333.41203) (29.00000,379.55840) (30.00000,436.40153)};
\addplot[blue!70!black,densely dotted,no marks] coordinates {(0.00000,95.00000) (30.00000,95.00000)};
\addplot[green!45!black,densely dotted,no marks] coordinates {(0.00000,45.00000) (30.00000,45.00000)};
\end{axis}\end{tikzpicture}\end{minipage}\hfill
\begin{minipage}{.325\textwidth}\begin{tikzpicture}\begin{axis}[width=\linewidth,height=5.4cm,xlabel style={align=center},tick label style={font=\scriptsize},label style={font=\footnotesize},title style={font=\small},grid=major,grid style={gray!15},xlabel={Service budget\\(\% of $\bar J$)},ylabel={Research share (\%)},title={(b) Best constant rule},xmin=0,xmax=30,ymin=15,ymax=62,legend style={draw=none,fill=none,font=\scriptsize,at={(0.97,0.97)},anchor=north east}]
\addplot[thick,blue!70!black,dashed,mark=*,mark size=0.9pt] coordinates {(2.50000,50.00000) (5.00000,50.00000) (7.50000,50.00000) (10.00000,50.00000) (12.50000,50.00000) (15.00000,48.32854) (20.00000,44.32628) (25.00000,40.80544) (30.00000,37.69054)};
\addlegendentry{Redeployed compute}
\addplot[thick,green!45!black,dashdotted,mark=*,mark size=0.9pt] coordinates {(2.50000,45.02264) (5.00000,40.74771) (7.50000,37.04945) (10.00000,33.82893) (12.50000,31.00750) (15.00000,28.52207) (20.00000,24.99197) (25.00000,22.32533) (30.00000,20.07832)};
\addlegendentry{Saturated demand}
\end{axis}\end{tikzpicture}\end{minipage}\hfill
\begin{minipage}{.325\textwidth}\begin{tikzpicture}\begin{axis}[width=\linewidth,height=5.4cm,xlabel style={align=center},tick label style={font=\scriptsize},label style={font=\footnotesize},title style={font=\small},grid=major,grid style={gray!15},xmode=log,xlabel={Safety productivity\\(multiple of baseline)},ylabel={Threshold share (\%)},title={(c) By productivity},xmin=0.05,xmax=2,ymin=0,ymax=100,xtick={0.1,0.25,0.5,1,2},xticklabels={0.1,0.25,0.5,1,2}]
\addplot[thick,black,no marks] coordinates {(0.05000,17.54721) (0.07500,30.40520) (0.10000,38.15189) (0.12500,43.22199) (0.15000,46.77823) (0.17500,49.40492) (0.20000,51.42243) (0.22500,53.01988) (0.25000,54.31578) (0.27500,55.38801) (0.30000,56.28984) (0.32500,57.05889) (0.35000,57.72249) (0.37500,58.30097) (0.40000,58.80973) (0.42500,59.26071) (0.45000,59.66323) (0.47500,60.02475) (0.50000,60.35125) (0.60000,61.39271) (0.70000,62.14449) (0.80000,62.71353) (0.90000,63.15993) (1.00000,63.52005) (1.10000,63.81720) (1.20000,64.06698) (1.30000,64.28020) (1.40000,64.46463) (1.50000,64.62596) (1.60000,64.76843) (1.70000,64.89529) (1.80000,65.00905) (1.90000,65.11165) (2.00000,65.20463)};
\addplot[green!45!black,densely dotted,no marks] coordinates {(0.05000,45.00000) (2.00000,45.00000)};
\end{axis}\end{tikzpicture}\end{minipage}
\caption{The funding threshold, the service budget, and safety productivity. (a) The demand-capacity threshold of Corollary~\ref{cor:threshold}, evaluated at the customer-funded optimum for each budget; below it, research is the cheaper marginal source, and a threshold above 100\% means research is cheaper whatever customers can use. Dotted lines mark the continuation shares of redeployed compute (95\%) and saturated demand (45\%). (b) Research share of the best constant rule by budget. (c) The threshold at the 5\% customer-funded optimum when safety productivity is scaled with the hazard functions held fixed. All rules meet the 12-month deadline.}\label{fig:budget}
\end{figure}

Under redeployed compute, the best constant rule keeps research at 50 percent up to a budget of \alRedLastNoCut{} percent (panel (b)). From \alRedFirstCut{} percent on it cuts research, to \alRedXFifteen{} percent at a 15 percent budget and \alRedXThirty{} percent at 30 percent. Under saturated demand it cuts research at every budget, from \alSatXTwoHalf{} percent at a 2.5 percent budget to \alSatXThirty{} percent at 30 percent.

Richer timing enlarges the gain from cutting research and moves the cut to the start of the takeoff (Table~\ref{tab:classes}, panels B and C). At a 15 percent budget under redeployed compute, the best twelve-segment rule cuts research most at the start and restores it to 50 percent from $F=\alRedTwelveBackFifteen$. Its largest milestone delay is \alRedTwelveDelayFifteen{} months. The rule lowers the probability of catastrophe \alRedTwelveAddFifteen{} points below the best customer-funded twelve-segment rule, against \alRedConstAddFifteen{} points among constant rules. At 30 percent the gain is \alRedTwelveAddThirty{} points; under saturated demand it is \alSatTwelveAddFifteen{} and \alSatTwelveAddThirty{} points at the two budgets. How deep the initial cut goes depends on how finely the rule can vary at the start; the gains barely depend on it (Appendix~\ref{app:numerics}). An early cut delays every later capability, giving knowledge more time before all of them, while it prolongs exposure only at levels where state risk is still low. Customer-funded safety still provides most of the reduction from full speed, but no longer nearly all of it (Table~\ref{tab:classes}, panel B).

Lower safety productivity moves the threshold the other way (panel (c)). With the hazard functions held fixed, the threshold at the 5 percent customer-funded optimum is \alThresholdYieldQuarter{} percent at a quarter of baseline productivity and \alThresholdYieldTenth{} percent at a tenth. It falls below the saturated-demand share of 45 percent at \alYieldSatCross{} percent of baseline productivity. At a tenth of baseline productivity, the best constant rule under saturated demand keeps research at \alYieldTenthSatX{} percent: customers pay even though they cannot use more compute. Higher productivity raises the threshold only slightly, to \alThresholdYieldDouble{} percent at twice baseline productivity.

\subsection{Robustness}\label{sec:robust}

The funding answer survives gradual redeployment and changes in how the hazard is built, but not changes in how services are valued. Let customers' use beyond the reference share expand over 6 to 24 months after completion. The budget at which research becomes the cheaper source then rises from \alCrossBudget{} to between \alExpandCrossSix{} and \alExpandCrossTwentyFour{} percent of $\bar J$. Across constructions of the hazard functions, the 5 percent threshold stays between \alHzThrMin{} and \alHzThrMax{} percent. Appendix~\ref{app:robust} gives both.

The curvature of service value moves the answer (Table~\ref{tab:curvature}). Replace $\sqrt Q$ by $Q^\eta$ and re-solve, holding the technology and the hazard functions fixed. With $\eta=1/4$, the threshold is \alCurvThrQuarter{} percent, and the best constant rule cuts research to \alCurvRedXQuarter{} percent even under redeployed compute. With $\eta=3/4$, the threshold is \alCurvThrThreeQuarter{} percent, and the best constant rule keeps research at \alCurvSatXThreeQuarter{} percent even under saturated demand. So demand capacity decides who pays given how fast the value of customer services diminishes. That curvature is a choice about how services are valued, not an estimate; the table reports what each choice implies. Proposition~\ref{prop:frontload} assumes square-root safety and service technologies, and with other curvatures its closed form changes.

\begin{table}[htbp]\centering\small
\caption{The funding answer and the curvature of service value}\label{tab:curvature}
\begin{tabular}{@{}ccccccc@{}}\toprule
& Demand-capacity & \multicolumn{2}{c}{Redeployed compute} & \multicolumn{2}{c}{Saturated demand} \\ \cmidrule(lr){3-4}\cmidrule(lr){5-6}
Service value & threshold (\%) & Research (\%) & Risk (\%) & Research (\%) & Risk (\%) \\ \midrule
$Q^{0.25}$ & 178.1 & 41.0 & 51.09 & 33.7 & 42.27 \\
$Q^{0.5}$ & 63.5 & 50.0 & 57.68 & 40.7 & 54.83 \\
$Q^{0.75}$ & 43.9 & 50.0 & 57.88 & 50.0 & 57.88 \\
$Q^{1}$ & 36.5 & 50.0 & 56.41 & 50.0 & 56.41 \\
\bottomrule\end{tabular}
\par\smallskip\begin{minipage}{\textwidth}\footnotesize
\emph{Notes:} Service value $Q^\eta$ replaces $\sqrt Q$; the capability technology and the hazard functions are unchanged. Each row uses a service budget of 5 percent of its own budget unit, the reference's discounted services over its first 24 months. The threshold is evaluated at the customer-funded constant rule (Corollary~\ref{cor:threshold}). The research and risk columns report the best constant rule, re-solved in each row; research at 50 percent means customers fund all extra safety. The row $Q^{0.5}$ is the paper's baseline.
\end{minipage}\end{table}

\section{Commitment before the technology is known}\label{sec:uncertainty}

The regulator does not know how safety research works when it commits. The \alStateCount{} technologies considered vary safety productivity and the transfer of knowledge to stronger models. They also vary whether safety compute scales with installed compute, whether it also advances capability, and whether danger continues after completion. The physical hazard functions stay fixed. The regulator commits to a constant rule that meets the service budget and the deadline in every technology. It picks the one that minimizes the largest excess probability of catastrophe over an informed choice. Appendix~\ref{app:states} states the problem and reports the rules (Table~\ref{tab:robust}).

For the technologies considered, the best fixed rule barely depends on the safety technology. Take the ten technologies in which safety compute does not advance capability. In every continuation and budget of Table~\ref{tab:robust}, the chosen rule is within \alMaxNoSpill{} points of the informed choice on the same share grid in each of them. The grid matters for this magnitude. A feasible rule found off the grid can beat the grid's informed choice by more than the bound. So the bound measures the cost of not knowing the technology among grid rules, not against an unrestricted optimum. At the 5 percent budget the bound is \alMaxNoSpillFive{} points (Table~\ref{tab:states}).

Two facts lie behind this, and they differ in standing. The first is structural. Without spillover, a rule's capability path, and so its service cost and delay, does not depend on the safety technology, so the affordable set is known in advance. The second is a finding for these \alStateCount{} technologies and constant rules. Within the affordable set, the ranking of rules by catastrophe probability barely moves with productivity, transfer, compute scaling, or continuing danger. The states hold fixed the hazard profile, which Appendix~\ref{app:hazard} varies for the known technology, as well as the returns to safety compute, the customer valuation, and the continuation. Their productivity also stays above the level at which the funding test of Section~\ref{sec:larger} changes sign under saturated demand. The deadline never excludes an affordable rule in the two consistent continuations, at any budget; the service budget is the binding commitment.

The remaining regret comes from spillover. When safety compute also advances capability, a rule delays less and costs fewer services, and an informed regulator would spend the slack on more safety. Under redeployed compute, knowing whether spillover is present would cut the 5 percent worst regret from \alRedRegret{} points to zero. Knowing whether safety productivity is weak would leave it at \alRedYieldSignal{}. In this exercise, free and perfect classification of spillover is worth more than classification of the productivity states, because spillover changes what a rule costs. The exercise has no acquisition cost, noisy signal, or dynamic rule, so it does not value measuring safety productivity in general.

The fixed rules follow the funding-source condition. Under redeployed compute, the 5 percent rule is customer-funded. Under saturated demand, it cuts research to \alSatRuleX{} percent, and its worst regret is \alSatRegret{} points, against \alSatBoundaryRegret{} for the best customer-funded rule. At larger budgets the fixed rules cut research in both continuations, as Section~\ref{sec:larger} predicts: under redeployed compute the 30 percent rule cuts research to \alRedRuleXThirty{} percent. Robustness within this set does not mean low risk. Under redeployed compute, the 5 percent rule leaves the probability of catastrophe at \alRedTailRisk{} percent when danger continues after completion. In the combined weak-productivity, transfer, and spillover state it leaves \alRedAdverseRisk{} percent.

\section{Implementation and conditional predictions}\label{sec:implementation}

A joint allocation can be implemented as compute shares that change at an observable capability milestone. Enforcement needs a meaningful distinction between inference, safety research, and capability research. The spillover states show what happens when safety workloads also advance capability even though their allocation is observed correctly.

The model makes three conditional predictions. First, who pays for safety compute should track the demand capacity customers have when the frontier finishes and the size of the service budget. This holds given how customer services are valued and given the safety technology. In the baseline square-root calibration with a small budget, a frontier whose customers can use most of its compute should fund extra safety from customers. One whose customers cannot absorb more inference should also slow research (Corollary~\ref{cor:threshold}). Either comparison can reverse when the valuation of services or the productivity of safety research changes (Table~\ref{tab:curvature}; Section~\ref{sec:larger}). Large enough service budgets should slow research in both, mostly early in the takeoff. In the calibration, danger that continues after the takeoff barely changes this. Second, where both margins are interior, safety compute relative to inference should fall over a takeoff. The exception is safety productivity that rises fast enough to satisfy condition \eqref{eq:risethreshold}. Third, a rule that frees inference compute is worth what customers can use; idle capacity adds nothing to its service value. The funding comparison also needs a stated valuation of customer services and a service budget. Testing the predictions requires four measures: research compute, inference use and idle capacity, capability progress, and the rate at which particular safety work reduces assessed hazard. Knowledge here is measured in log-hazard reduction, not in publications or evaluation tasks.

The analysis is limited in six respects. First, the numbers come from one conditional scenario whose hazard functions are assumed. Second, services are an index, not revenue, so the model says nothing about who bears the monetary cost. Third, one frontier lab is regulated. The deadline stands for a rival whose response is not modeled, and the lab has no private objective. Fourth, installed compute grows exogenously. A rule changes compute use by category, and under saturated demand some capacity can sit idle, but it does not change the installed stock. Modeling how regulation changes compute demand would need investment, time to build, customer revenue, and a private compliance objective. Fifth, the rules come from restricted classes and are not certified global optima. Sixth, the results do not rank instrument families: the regulator minimizes hazard under a service budget and a deadline and imposes no risk ceiling.

\section{Conclusion}

Who should pay for safety compute turns on how much compute customers can use once the frontier is finished, given how fast the value of their services diminishes. Safety paid for by customers delays no capability, so it holds whatever a finished frontier provides. It should come early in the takeoff relative to customer use, unless safety productivity rises fast. In the baseline square-root calibration at a 5 percent budget, the best customer-funded rule found is within \alSatTwelveAdd{} points of the best joint rule found in the tested policy classes.

Slowing research adds to it when its hazard advantage outweighs what it costs customers through later quality and later completion. The value of finishing settles that comparison as a threshold in demand capacity. With the baseline valuation and a small service budget, customers should pay wherever they can use most of the compute a finished frontier frees. Where they cannot, research should pay too. In the calibration the threshold rises with the budget, because each further unit taken from customers costs more. Large enough budgets should also slow research, mostly early in the takeoff, whenever slowing it cuts hazard at all. For the safety technologies considered, the best fixed allocation on the share grid barely depends on the productivity of safety research; whether safety work also advances capability matters more.

The continuation is a primitive of the problem, not a choice made for convenience. A model that lets a finished frontier serve customers less than a slowed one turns the endpoint term into a saving. It then credits safety research with gains that belong to the continuation.

\FloatBarrier
\appendix
\section{Scenario construction}\label{app:calibration}

The full-speed capability path follows from \eqref{eq:capability} with $x=x_0$:
\begin{align}
A_0&=\frac{g(1-e^{-\rho})}{\rho(e^{gT_0}-1)},\\
t_0(F)&=\frac1g\log\left[1+(e^{gT_0}-1)
                   \frac{1-e^{-\rho F}}{1-e^{-\rho}}\right].
\end{align}
The full-speed pace at capability $F$ is $u_0(F)=A_0e^{gt_0(F)+\rho F}$. Put $H_0=-\log(1-0.75)$, $q=0.75$, $c=2\log[q/(1-q)]$, and $a_0=H_0c/(e^c-1)$. Full-speed hazard density per unit of capability is $a_0e^{cF}$, so that a share $q$ of it falls in the second half of capability development. With state share $\omega=1/2$, the unprotected primitives are
\begin{equation}
h(F)=\omega a_0e^{cF}u_0(F)e^{\sigma t_0(F)},\qquad
k(F)=(1-\omega)a_0e^{cF}e^{\sigma t_0(F)}.
\label{eq:primitives}
\end{equation}
The two primitives are measured in different units. The state part $h$ is a hazard per month, so it carries the full-speed pace $u_0$; the step part $k$ is a hazard per unit of capability. The factor $e^{\sigma t_0(F)}$ removes the protection that reference knowledge already provides. So the primitives reproduce the reference density when knowledge is $\sigma t_0(F)$. These functions are built once; later technology changes do not rebuild them.

The fitted productivity solves
\begin{equation}
\frac{h(0)}\sigma(1-e^{-12\sigma})+H_0e^{-12\sigma}=-\log(1-0.40).
\label{eq:sigmafit}
\end{equation}
The first term is state hazard during the stopped-clock pause; the second is hazard on the repeated path, protected by 12 additional months of knowledge. The fitted value is $\sigma=\alSigma$ per month. Matching the target does not validate the hazard functions, the state/step split, or the returns to safety compute.

\begin{table}[htbp]\centering\small
\caption{Judgments, assumptions, and fitted inputs}\label{tab:parameters}
\begin{tabular}{@{}p{4.6cm}p{10.8cm}@{}}\toprule
Input & Value and role\\\midrule
Source judgments & 18-month full-speed takeoff; 75\% reference catastrophe probability; 40\% under a safety-case regime; about a year of catch-up slack.\\
Mapping of the 40\% judgment & A 12-month initial pause with reference safety allocation and the compute clock stopped; a modeling assumption.\\
Installed compute & $\gamma=\log2/6$ per month; illustrative and exogenous.\\
Research technology & $\varepsilon=1/2$, $g=\log2/12$, $\rho=\log10$; $A_0$ fitted to completion at month 18.\\
Reference allocation & $(x_0,s_0,i_0)=(0.50,0.05,0.45)$.\\
Safety technology & $\psi=\alpha=1/2$; $\sigma=\alSigma$ per month, fitted using the stopped-clock pause; scaled by 0.05--2 in Section~\ref{sec:larger}, with the hazard functions fixed.\\
Hazard construction & State share $\omega=1/2$; 75\% of reference hazard in the second capability half; both varied in Appendix~\ref{app:hazard}, with the pause judgment and the reference probability.\\
Customer services & Square-root value; $\mu=\log10$; $r=0.03/12$ per month.\\
Continuation & Demand capacity $\bar\imath$, with $i_c=\min\{\bar\imath,1-s_0\}$ (Lemma~\ref{lem:continuation}): 1 for redeployed compute (lead case), 0.45 for saturated demand, 0.45--1 in Figure~\ref{fig:source}. Continuing danger $\xi=0$, and up to 5 in Section~\ref{sec:funding}. Reference reset: $\bar\imath=1$, $i_c=0.45$. Gradual redeployment (Appendix~\ref{app:gradual}): use beyond 45\% expands over 6--24 months after completion, or capacity reaches 100\% at month 36--72.\\
Policy commitments & Maximum milestone delay $L=12$ months; service budgets $B=b\bar J$, with $b=0.05$ in the main comparisons, 0.01--0.30 in Section~\ref{sec:larger}, and $b\in\{0.05,0.15,0.30\}$ in Section~\ref{sec:uncertainty}; costs over 60 months.\\\bottomrule
\end{tabular}\end{table}

The growth rate describes an illustrative installed-compute path, not an estimate from observed laboratory investment. Evidence on the growth of training compute for milestone models is a different object \citep{sevilla2022}. At this rate the compute stock expands severalfold over the service horizon, which is why later service flows dominate discounted totals. The endpoint is also a modeling choice: danger ends in the baseline, and services continue at endpoint quality while compute grows. Neither assumption is a forecast of post-takeoff output or safety.

\section{Proofs and extensions}\label{app:proof}

A budget stated as a percentage of services changes meaning with the horizon.

\begin{lemma}[Service cost after catch-up]\label{lem:budget}
Suppose the policy and the reference both complete by $T^\ast$ and thereafter have the same capability, compute stock, and customer share. Then $C_\tau=C_{T^\ast}$ for every $\tau\geq T^\ast$. If $J_{0,\tau}$ keeps increasing, a positive cost's percentage $C_\tau/J_{0,\tau}$ falls with $\tau$.
\end{lemma}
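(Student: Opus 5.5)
The plan is to write the cost as an integral of the pointwise difference in service flows and show that, after $T^\ast$, this difference is identically zero. By \eqref{eq:cost},
\[
C_\tau=J_{0,\tau}-J_\tau=\int_0^\tau\big[b(t,F_0(t),i_0(t))-b(t,F(t),i(t))\big]\,dt,
\]
where $(F_0,i_0)$ is the reference path and $(F,i)$ the policy path, both following the continuation after completion. Split the integral at $T^\ast$, which gives
\[
C_\tau=C_{T^\ast}+\int_{T^\ast}^\tau[b_0(t)-b(t)]\,dt .
\]

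Next I would show that the second integrand vanishes. The flow \eqref{eq:flow} depends on the date only through the common factor $e^{(\gamma/2-r)t}$, which enters through the compute stock $K(t)=e^{\gamma t}$ and discounting. Otherwise it depends only on capability and on $\min\{i,\bar\imath\}$. For $t\ge T^\ast$, the hypothesis gives both paths the same capability, the same compute stock, and the same customer share, and the demand cap $\bar\imath$ is common to both. So the two flows agree pointwise and the integral is zero, which gives $C_\tau=C_{T^\ast}$ for every $\tau\ge T^\ast$.

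For the percentage claim, note that $C_\tau$ equals the constant $C_{T^\ast}>0$ on $[T^\ast,\infty)$. The denominator $J_{0,\tau}$ is strictly increasing by assumption. A fixed positive numerator over a strictly increasing positive denominator is strictly decreasing, so $C_\tau/J_{0,\tau}$ falls with $\tau$.

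The only real obstacle is bookkeeping: checking that "same capability, compute stock, and customer share" covers every argument of $b$. It does. The discount factor is common, $K$ is exogenous and the same for both paths, and the realized use $\min\{i,\bar\imath\}$ depends only on the share and the common cap. Nothing path-dependent enters the flow, since services do not depend on knowledge $S$. So the argument is immediate once the integral is split at $T^\ast$.
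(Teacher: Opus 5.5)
Your proof is correct and is essentially the paper's argument: the paper differentiates $C_\tau$ in $\tau$ to get the reference flow minus the policy flow, which vanishes after $T^\ast$, and splitting the integral at $T^\ast$ is the same step in integrated form. Your explicit treatment of the percentage claim (constant positive numerator over a strictly increasing positive denominator) fills in the part the paper leaves implicit.
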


\begin{proof}
Differentiating $C_\tau$ in $\tau$ gives the reference flow minus the policy flow, which is zero after $T^\ast$.
\end{proof}

\noindent\emph{Proof of Proposition~\ref{prop:source}.} 
Without spillover, the capability path does not depend on $i$, so customer funding changes only $J^{\mathrm{pre}}=\sqrt i\int_0^1e^{\kappa t_x+\mu F/2}u_x^{-1}\,dF$. Then $\partial J/\partial i=J^{\mathrm{pre}}/(2i)$ while $i<\bar\imath$, and the completion date does not move. For funding from research, write $J=\int_0^{T_x}b(t,F_x(t),i)\,dt+V_c(T_x)$, where $F_x(t)$ inverts $t_x(F)$. Holding $i$ fixed,
\[
\frac{\partial J}{\partial x}
=\int_0^{T_x}b_F\,\frac{\partial F_x(t)}{\partial x}\,dt
+\big[b_T-M(T_x)\big]\frac{\partial T_x}{\partial x}.
\]
Here $b_F=(\mu/2)b$, $\partial F_x(t)/\partial x=\ell(F)u_x(F)$ from $t_x(F_x(t))=t$, and $\partial T_x/\partial x=-\ell(1)$. Changing variables with $dt=dF/u_x$ gives $\partial J/\partial x=Q_x+\ell(1)[M(T_x)-b_T]$. So moving one unit of share from research to safety costs $c_R=\partial J/\partial x$, which is \eqref{eq:cR}. The same expression follows by differentiating the capability-domain form of $J$, using $d\ell/dF=\vartheta/u_x$ and $\ell(0)=0$. Total hazard is $H(x,s)+\Phi e^{-S_T}$ with $S_T=a(s)T_x$. Customer funding raises $S_T$ by $a'(s)T_x$; funding from research raises it by $a'(s)T_x$ through the safety share and by $a(s)\ell(1)$ through the later completion date. With Proposition~\ref{prop:funding} these give \eqref{eq:RC} and \eqref{eq:RR}. Comparing hazard reduction per unit of service cost, research is the cheaper source when $R_R/c_R>R_C/c_C$. Because $c_R$ rises linearly in $M$ with slope $\ell(1)>0$, rearranging gives \eqref{eq:threshold}.

\noindent\emph{Proof of Lemma~\ref{lem:continuation}.} 
After completion, capability is fixed and the safety share is $s_0$. Hazard after completion then does not depend on how the remaining $1-s_0$ of compute is split between customers and research. The split affects only the service budget, and services after completion rise with $\min\{i,\bar\imath\}$. Given any path before completion, and with the reference and so $J_{0,\tau}$ held fixed, raising the policy's customer share after completion to $\min\{\bar\imath,1-s_0\}$ relaxes the budget without raising hazard. So an optimal continuation can use it without loss. A strict gain requires the service constraint to have positive marginal value, so that the services it frees can buy hazard reduction; with a slack budget the regulator is indifferent. Then $V_c(T)=\int_T^\tau e^{\kappa t+\mu/2}\sqrt{i_c}\,dt$, and $M(T)=-V_c'(T)=e^{\kappa T+\mu/2}\sqrt{i_c}$. The reference reset gives customers $i_0<\min\{\bar\imath,1-s_0\}$ whenever $\bar\imath>i_0$.

\noindent\emph{Proof of Corollary~\ref{cor:threshold}.} 
With $M(T)=b(T,1,i_c)$, the finishing premium is $\sqrt{i_c/i}-1$, so \eqref{eq:threshold} reads $\sqrt{i_c}<\sqrt i\,(1+\pi^\ast)$, which is \eqref{eq:capthreshold}. The continuing-danger closure gives hazard $\xi h(1)e^{-S_T-\sigma u}$ at $u$ months after completion, which integrates to $\Phi e^{-S_T}$ with $\Phi=\xi h(1)/\sigma$. Write $R_R/R_C=(A+\Phi B)/(C+\Phi D)$ with $A=H_x-H_s$, $C=-H_s$, $B=e^{-S_T}[a'(s)T_x+a(s)\ell(1)]$, and $D=e^{-S_T}a'(s)T_x$. Its derivative in $\Phi$ has the sign of $BC-AD$, which is positive exactly when $B/D>A/C$, and the ratio tends to $B/D$ as $\Phi$ grows. The threshold $\pi^\ast$, and with it the signed root $\sqrt i\,(1+\pi^\ast)$, rises with $R_R/R_C$. The clipped threshold $\bar\imath^\ast=[\max\{0,\sqrt i\,(1+\pi^\ast)\}]^2$ is nondecreasing in the signed root and rises strictly only where the root is positive.

\noindent\emph{Proof of Proposition~\ref{prop:budget}.} 
With $x=x_0$ the capability path does not depend on $i$, so neither do $t_x$, $u_x$, $\ell$, $T_x=T_0$, or $E$. By their definitions in Proposition~\ref{prop:source}, $J^{\mathrm{pre}}$ and $Q_x$ are proportional to $\sqrt i$. So $c_C=J^{\mathrm{pre}}/(2i)=\theta_C E/\sqrt i$ and $Q_x=\theta_Q E\sqrt i$ with positive constants $\theta_C$ and $\theta_Q$, and $b_T=e^{\kappa T_0+\mu/2}\sqrt i$. By \eqref{eq:threshold} and \eqref{eq:capthreshold}, $\sqrt{\bar\imath^\ast}=\sqrt i\,(1+\pi^\ast)=\sqrt i+(c_C\rho-Q_x)/E$, which is \eqref{eq:budgetthreshold}. (a) As $i\to0$ the first term vanishes and the second is at least $\theta_C\underline\rho/\sqrt i$. (b) Differentiating \eqref{eq:budgetthreshold},
\[
\frac{d\sqrt{\bar\imath^\ast}}{di}=\frac{(1-\theta_Q)\,i-\theta_C\,[\rho(i)-2i\rho'(i)]}{2i^{3/2}},
\]
and where $\sqrt{\bar\imath^\ast}>0$, $\bar\imath^\ast$ moves with its square root. If $\rho'(i)\leq0$, the bracket is at least $\rho(i)$.

\noindent\emph{Proof of Corollary~\ref{cor:gradual}.} 
(a) The continuation is worth $V_c(T)=\int_T^\tau e^{\kappa t+\mu/2}\sqrt{i_c(t-T)}\,dt$, so
\[
M(T)=-V_c'(T)=e^{\kappa T+\mu/2}\sqrt{i_c(0)}+\int_T^{T+R}e^{\kappa t+\mu/2}\,\frac{d\sqrt{i_c(u)}}{du}\bigg|_{u=t-T}\,dt,
\]
which is the stated expression since $i_c(0)=i_0$. When $\kappa>0$, $e^{\kappa u}>1$ on $(0,R]$, and $\sqrt{i_c(u)}$ rises there, so the integral exceeds $\sqrt{i_c}-\sqrt{i_0}$. The path after completion does not depend on $x$, so the proof of Proposition~\ref{prop:source} goes through with this $M$. (b) Now $V_c(T)=\int_T^\tau e^{\kappa t+\mu/2}\sqrt{\min\{\bar\imath(t),1-s_0\}}\,dt$, whose integrand does not depend on $T$, so $M(T)$ is that integrand at $T$. With $i<\bar\imath(t)$ before completion, customer funding still costs $J^{\mathrm{pre}}/(2i)$, and the proof of Corollary~\ref{cor:threshold} applies with $\min\{\bar\imath(T_x),1-s_0\}$ as the continuation share.

The minimized present-value Hamiltonian, omitting terms independent of the controls, is
\begin{equation}
\mathcal H=e^{-S}h(F)+[q_F+e^{-S}k(F)]u
                      +q_S\sigma(F)y(s)-\nu b.
\end{equation}
The service multiplier is constant because it belongs to one integral budget. The deadline constraint can change the capability costate and create junctions, but it does not depend on $S$, and in the baseline $\dot S$ does not depend on $S$. So
\[
\dot q_S=-\frac{\partial\mathcal H}{\partial S}
          =[h(F)+k(F)u]e^{-S}=\lambda.
\]
Baseline danger ends at completion and post-completion services do not depend on knowledge, so $q_S(T)=0$ and $q_S(t)=-\int_t^T\lambda(v)\,dv=-A(t)$. The continuation enters only the terminal value of capability. With continuing danger, the terminal hazard $\Phi e^{-S(T)}$ gives $q_S(T)=-\Phi e^{-S(T)}$ instead, so $-q_S$ also counts the hazard that remains after completion. Survival-weighted services or a knowledge-dependent terminal payoff would change $q_S$.

For general $\alpha$ and a service flow $B(t,F)i^\phi$, equal safety and customer marginal values give
\[
A\sigma(F)\frac{(1-\psi)\alpha}{s_0^\alpha}s^{\alpha-1}
       =\nu B(t,F)\phi i^{\phi-1}.
\]
An explicit equation for $s/i$ follows when the two exponents coincide; the square-root case gives \eqref{eq:ratio}. With different exponents the path of the ratio also depends on the scale of one allocation, and the monotonicity result needs further restrictions.

With diminishing marginal value of services, the direction of $s/i$ also depends on the path of the customer share. Let $q=e^{(\gamma t+\mu F)/2}\sqrt i$ be the undiscounted service flow and value it as $e^{-rt}q^{1-\eta}/(1-\eta)$. At a fixed customer share, the growth term of the customer margin, $\kappa+\mu\dot F/2$, becomes $(1-\eta)(\gamma+\mu\dot F)/2-r$, which is negative for $\eta>1$. Equal safety and customer marginal values then give
\[
\frac{d}{dt}\log\frac si=2\frac{d}{dt}\log\sigma-2\frac{\lambda}{A}-(1-\eta)(\gamma+\mu\dot F)+2r+\eta\frac{d}{dt}\log i,
\]
so a result like Proposition~\ref{prop:frontload} would need restrictions on how the customer share moves.

With knowledge loss $\dot S=a(t,F,s)-\delta\dot FS$, knowledge created at $t$ protects a later date $v$ with weight $e^{-\delta[F(v)-F(t)]}$, and
\[
-q_S(t)=\int_t^T e^{-\delta[F(v)-F(t)]}\lambda(v)\,dv.
\]
This weighted stock need not fall at rate $\lambda$, and a compute-scaling factor changes the safety coefficient in \eqref{eq:coefficients}.

\begin{corollary}[Transfer and absolute safety compute]\label{cor:extension}
Suppose there is no capability spillover and knowledge satisfies
\[
\dot S=\sigma(F)\left[\psi+(1-\psi)\sqrt{s/s_0}\,e^{\beta\gamma t}\right]
       -\delta\dot F S,
\qquad \delta,\beta\geq0.
\]
Under the conditions of Proposition~\ref{prop:frontload}, put $A_\delta(t)=\int_t^T e^{-\delta[F(v)-F(t)]}\lambda(v)\,dv$. Then
\begin{equation}
\frac{d}{dt}\log\frac{s}{i}
=2\frac{d}{dt}\log\sigma(F(t))-2\frac{\lambda}{A_\delta}
 +(2\beta-1)\gamma+2r+(2\delta-\mu)\dot F,
\label{eq:generalratio}
\end{equation}
and the safety/customer ratio rises exactly when the right side is positive.
\end{corollary}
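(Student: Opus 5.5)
We would follow the proof of Proposition~\ref{prop:frontload} step by step, changing two ingredients: the costate of knowledge, and the safety coefficient in the pointwise problem \eqref{eq:pointwise}. \textbf{Step 1 (Hamiltonian).} With the new knowledge law, the minimized present-value Hamiltonian, with terms independent of the controls omitted, is
\[
\mathcal H=e^{-S}h(F)+[q_F+e^{-S}k(F)]u+q_S\Big(\sigma(F)\big[\psi+(1-\psi)\sqrt{s/s_0}\,e^{\beta\gamma t}\big]-\delta uS\Big)-\nu b.
\]
The term $-\delta uS$ couples knowledge to the research pace. Because there is no spillover, $u$ does not depend on $s$ or $i$. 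So this term, like $q_F$ and the deadline, changes only the research margin. It does not change the equality between the safety and customer margins. That is why, as in Proposition~\ref{prop:frontload}, $W$ never has to be known.

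\textbf{Step 2 (costate).} Differentiate $\mathcal H$ in $S$. The hazard terms give $-\lambda$, and the transfer term gives $-\delta u q_S$, so
\[
\dot q_S=-\partial\mathcal H/\partial S=\lambda+\delta\dot F\,q_S .
\]
The deadline constraint does not depend on $S$, so it adds no jumps to $q_S$ at junctions. In the baseline, danger ends at completion, so the terminal condition is $q_S(T)=0$. Next we verify that $q_S=-A_\delta$ solves this linear equation. Differentiating the definition of $A_\delta$ gives $\dot A_\delta=-\lambda+\delta\dot F A_\delta$, which matches. Uniqueness of solutions to the linear ODE then gives $-q_S=A_\delta$. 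This is the transfer-weighted remaining hazard already mentioned in the text.

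\textbf{Step 3 (interior first-order condition).} On an interval where safety and customers are interior and $i<\bar\imath$, equate the two marginal values:
\[
A_\delta\,\sigma(F)(1-\psi)\,e^{\beta\gamma t}\,\frac{1}{2\sqrt{s\,s_0}}
=\nu\,e^{\kappa t+\mu F/2}\,\frac{1}{2\sqrt i}.
\]
This yields the analogue of \eqref{eq:ratio}:
\[
\frac si=\left[\frac{A_\delta\sigma(F)(1-\psi)e^{\beta\gamma t}}{\sqrt{s_0}\,\nu e^{\kappa t+\mu F/2}}\right]^2 .
\]
The condition holds even if research sits at a bound.

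\textbf{Step 4 (log-differentiation).} The multiplier $\nu$ is constant because it belongs to one integral budget. Using Step 2, $\frac{d}{dt}\log A_\delta=-\lambda/A_\delta+\delta\dot F$. Log-differentiating the ratio gives
\[
2\,\frac{d}{dt}\log\sigma-2\lambda/A_\delta+2\delta\dot F+2\beta\gamma-2\kappa-\mu\dot F .
\]
Substituting $2\kappa=\gamma-2r$ gives \eqref{eq:generalratio}, and its sign gives the final statement.

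The step needing the most care is Step 2. There we must confirm that the $-\delta\dot F S$ term enters through $u$, which depends on $x$ and $F$, without contaminating the safety–customer margin. We must also confirm that the deadline and regularity assumptions leave $q_S$ continuous. We would handle both as the text does: the deadline constraint does not involve $S$, and under no spillover $u$ is independent of $s$.
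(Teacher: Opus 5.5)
Your proposal is correct and follows the paper's own proof. You derive $\dot q_S=\lambda+\delta\dot F\,q_S$, identify $q_S=-A_\delta$ through $\dot A_\delta=-\lambda+\delta\dot F A_\delta$ and $A_\delta(T)=0$, and equate the safety and customer margins to get \eqref{eq:ratio} with $A$ replaced by $A_\delta$ and the factor $e^{\beta\gamma t}$. You then log-differentiate using $2\kappa=\gamma-2r$. Your extra checks are correct details that the paper leaves implicit: the transfer term affects only the research margin because $u$ does not depend on $s$ or $i$, and $q_S$ does not jump at deadline junctions.
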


\begin{proof}
The knowledge costate satisfies $\dot q_S=\lambda+\delta\dot Fq_S$, so $q_S=-A_\delta$ and $\dot A_\delta=-\lambda+\delta\dot F A_\delta$. Equal safety and customer marginal values give \eqref{eq:ratio} with $A$ replaced by $A_\delta$ and the safety numerator multiplied by $e^{\beta\gamma t}$. Log differentiation yields \eqref{eq:generalratio}.
\end{proof}

With $\beta=1/2$, compute growth enters both margins equally and cancels. Knowledge loss adds $2\delta\dot F$, because knowledge created later travels a shorter capability distance before the hazards it protects. It also changes $A_\delta$, the path, and hazard, so a larger loss rate need not raise the realized ratio. In the uncertainty exercise, only the combined rise-and-transfer state has a positive capability coefficient, $2\zeta+2\delta-\mu=\alRiseTransferCoef$. So it is the only state in which the ratio can rise; whether it does depends on $\lambda/A_\delta$ along the path. Capability spillovers add a development effect to safety itself and fall outside the corollary.

\section{The reference reset}\label{app:reset}

The reference reset produces a large and spurious gain. Its best two-stage rule (Table~\ref{tab:known}, panel C) puts \alResStageEarlyS{} percent of compute into safety before $F_c=\alResStageSwitch$. It then cuts research to its 2 percent floor and gives customers \alResStageLateI{} percent of compute. It completes at month \alResStageEnd{} and reaches \alResStageRisk{} percent, \alResStageGainBoundary{} points below customer-funded safety.

The continuation pays for this. The rule's early safety costs \alResEarlyLoss{} percent of $\bar J$ before month 18, nearly twice the budget. Between month 18, when the reference finishes, and its own completion, the rule recovers \alResRepay{} percent. Once its capability passes \alResCrossF, a frontier serving \alResStageLateI{} percent of compute to customers out-serves a finished one serving 45 percent. Just before completion its service flow is \alResFlowRatio{} times the reference's. Delay pays for itself, so the rule wants more of it: lowering the research floor to 0.5 percent lets it use \alRelaxedDelay{} months of delay and reach \alRelaxedRisk{} percent. In the endpoint term of Proposition~\ref{prop:source}, the reset makes $\sqrt{i_c}-\sqrt i$ strongly negative late in the takeoff.

None of this survives a consistent continuation. Under redeployed compute the finished frontier serves customers better than the slowed one; under saturated demand the extra inference is worthless. The continuation is a primitive of the allocation problem, not a choice to be made for convenience.

\section{Robustness of the known-technology results}\label{app:robust}

At small budgets, who pays turns on how much compute customers can use when the frontier finishes. How fast freed compute reaches them afterward does not change the answer (Appendix~\ref{app:gradual}). Let customers' use beyond the reference share expand over 6 to 24 months after completion. The finishing premium at the 5 percent customer-funded optimum then rises from \alPremiumRedeployed{} to between \alExpandPremiumSix{} and \alExpandPremiumTwentyFour{} percent, as Corollary~\ref{cor:gradual} predicts. The budget at which research becomes the cheaper marginal source rises from \alCrossBudget{} to between \alExpandCrossSix{} and \alExpandCrossTwentyFour{} percent of $\bar J$. If instead demand capacity grows on its own clock, what matters is the capacity reached when customer-funded rules finish. At the 5 percent budget, research becomes the cheaper source once full use takes more than \alCriticalAbsorb{} months, though the gains from cutting it are small.

The funding answers do not depend on how the hazard functions are built (Appendix~\ref{app:hazard}). The variants change the state share of hazard, how late it comes, the pause judgment that fixes safety productivity, and the reference probability. Across them the 5 percent threshold stays between \alHzThrMin{} and \alHzThrMax{} percent. Under redeployed compute, research becomes the cheaper source from budgets of \alHzCrossMin{} to \alHzCrossMax{} percent. Customer-funded timing remains most of the gain at the 5 percent budget. Risk levels move more than these comparisons do.

\subsection{Freed compute that arrives gradually}\label{app:gradual}

Neither variation of Corollary~\ref{cor:gradual} changes customer-funded rules, which finish on the reference date; only the price of delay moves (Table~\ref{tab:gradual}).

Gradual expansion makes delay costlier. Let customers' use beyond the reference share expand over 6, 12, or 24 months after completion. The finishing premium at the 5 percent customer-funded optimum then rises from \alPremiumRedeployed{} percent to \alExpandPremiumSix{}, \alExpandPremiumTwelve{}, and \alExpandPremiumTwentyFour{} percent. All three lie further above \alPremiumThreshold{} percent, the premium below which research is the cheaper source. Research becomes the cheaper marginal source only at larger budgets: from \alExpandCrossSix{}, \alExpandCrossTwelve{}, and \alExpandCrossTwentyFour{} percent of $\bar J$, against \alCrossBudget{} percent with immediate redeployment.

Demand that grows on its own clock can make delay cheap. Let demand capacity rise in calendar time, from 45 percent at the start of the takeoff to all installed compute after $A$ months. The test then compares the capacity reached by month 18, when customer-funded rules finish, with the \alThreshold{} percent threshold. Capacity at completion exceeds the threshold exactly when $A$ is below \alCriticalAbsorb{} months. With $A$ of 60 or 72 months, the best constant rule at the 5 percent budget cuts research, to \alAbsorbXSixty{} and \alAbsorbXSeventyTwo{} percent. The gains are small: \alAbsorbGainSixty{} and \alAbsorbGainSeventyTwo{} points. At a 15 percent budget it cuts research in every case shown, to between \alAbsorbXFifteenMin{} and \alAbsorbXFifteenMax{} percent.

\begin{table}[htbp]\centering\small
\caption{When freed compute reaches customers gradually}\label{tab:gradual}
\setlength{\tabcolsep}{5pt}
\begin{tabular}{@{}lrlrrr@{}}\toprule
& Finishing & Cheaper source & Research cheaper & \multicolumn{2}{c}{Research share (\%)}\\
\cmidrule(l){5-6}
Continuation & premium (\%) & at 5\% budget & from budget (\%) & 5\% budget & 15\% budget\\\midrule
Immediate redeployment & 64 & customers & 13.1 & 50.0 & 48.3\\
\addlinespace[3pt]\multicolumn{6}{@{}l}{\emph{Use beyond 45\% expands after completion over}}\\
\quad 6 months & 73 & customers & 14.9 & 50.0 & 49.9\\
\quad 12 months & 84 & customers & 16.8 & 50.0 & 50.0\\
\quad 24 months & 116 & customers & 21.0 & 50.0 & 50.0\\
\addlinespace[3pt]\multicolumn{6}{@{}l}{\emph{Capacity reaches 100\% at month (capacity at month 18)}}\\
\quad 36 (72.5\%) & 43 & customers & 7.9 & 50.0 & 43.4\\
\quad 48 (65.6\%) & 36 & customers & 5.7 & 50.0 & 41.1\\
\quad 60 (61.5\%) & 32 & research & 4.2 & 49.1 & 39.3\\
\quad 72 (58.8\%) & 29 & research & 3.1 & 47.7 & 38.0\\
\bottomrule\end{tabular}
\par\smallskip\begin{minipage}{\textwidth}\footnotesize
The finishing premium is the value of finishing relative to the slowed frontier's service flow at the 5\% customer-funded optimum (Proposition~\ref{prop:source}); research is the cheaper marginal source when it lies below \alPremiumThreshold\%. The fourth column gives the budget, in percent of $\bar J$, above which the threshold of Corollary~\ref{cor:threshold} exceeds the effective continuation share, so that research is the cheaper marginal source. Research shares are those of the best constant rule. Customer-funded rules finish on the reference date and are unchanged by either variation. Every rule meets the 12-month deadline.
\end{minipage}\end{table}

\subsection{The hazard construction}\label{app:hazard}

The hazard functions rest on four choices, each varied here one at a time (Table~\ref{tab:hazard}). Half of reference hazard is state risk, and three quarters of it falls in the second half of capability development. A 12-month pause brings the probability of catastrophe to 40 percent, which fixes safety productivity. The reference probability is 75 percent. Each variation refits safety productivity and rebuilds the hazard functions, leaving the capability path and services unchanged.

\begin{table}[htbp]\centering\small
\caption{Key numbers under alternative hazard constructions}\label{tab:hazard}
\setlength{\tabcolsep}{3.5pt}
\begin{tabular}{@{}lrrrrrrr@{}}\toprule
& & \multicolumn{2}{c}{Customer-funded} & Gain from & Hazard & & Research\\
& & \multicolumn{2}{c}{risk (\%)} & research & advantage & Threshold & cheaper\\
\cmidrule(lr){3-4}
Hazard construction & Productivity & Constant & 12 segments & cuts & (\%) & (\%) & from (\%)\\\midrule
Baseline & 0.086 & 57.68 & 51.01 & 0.91 & 25 & 63.5 & 13.1\\
\addlinespace[3pt]\multicolumn{8}{@{}l}{\emph{State share of hazard (baseline 0.5)}}\\
\quad 0.25 & 0.085 & 57.97 & 51.40 & 0.82 & 28 & 65.3 & 12.4\\
\quad 0.75 & 0.088 & 57.39 & 50.62 & 1.02 & 22 & 61.8 & 13.8\\
\addlinespace[3pt]\multicolumn{8}{@{}l}{\emph{Share of hazard in the second half (baseline 0.75)}}\\
\quad 0.5 & 0.094 & 59.06 & 50.99 & 1.04 & 24 & 63.2 & 13.2\\
\quad 0.9 & 0.084 & 56.73 & 50.86 & 1.25 & 25 & 63.6 & 13.1\\
\addlinespace[3pt]\multicolumn{8}{@{}l}{\emph{Probability after the 12-month pause (baseline 40\%)}}\\
\quad 30\% & 0.117 & 51.69 & 43.16 & 1.06 & 26 & 64.5 & 12.7\\
\quad 50\% & 0.060 & 62.90 & 58.12 & 0.74 & 23 & 62.0 & 13.7\\
\addlinespace[3pt]\multicolumn{8}{@{}l}{\emph{Reference probability (baseline 75\%)}}\\
\quad 60\% & 0.086 & 43.35 & 37.60 & 0.76 & 25 & 63.5 & 13.1\\
\quad 90\% & 0.086 & 76.03 & 69.43 & 0.94 & 25 & 63.5 & 13.1\\
\bottomrule\end{tabular}
\par\smallskip\begin{minipage}{\textwidth}\footnotesize
Each row refits safety productivity (per month) to the pause judgment and rebuilds the hazard functions; the capability path and services do not change. Risk is the probability of catastrophe at the 5\% budget with research at 50\%. The gain from research cuts is the further reduction in risk when twelve-segment rules may also cut research, under saturated demand. The hazard advantage is how much more hazard funding from research cuts than customer funding per unit of share at the 5\% customer-funded optimum; the threshold is that of Corollary~\ref{cor:threshold} there; the last column is the budget, in percent of $\bar J$, above which research is the cheaper marginal source under redeployed compute. In the reference-probability rows the pause brings the probability to 28.7\% and 57.2\%, preserving the baseline's proportional cut in cumulative hazard.
\end{minipage}\end{table}

At the 5 percent customer-funded optimum, funding from research cuts between \alHzAdvMin{} and \alHzAdvMax{} percent more hazard per unit of share than customer funding. The threshold stays far below the redeployed continuation share and above the saturated one, and the budget at which research becomes the cheaper marginal source barely moves (Section~\ref{sec:robust}). Front-loading customer-funded safety lowers the probability of catastrophe by \alHzTimingMin{} to \alHzTimingMax{} points relative to the constant rule. Allowing research cuts adds \alHzSatAddMin{} to \alHzSatAddMax{} points under saturated demand. Under redeployed compute it adds at most \alHzRedAddMax{} points, and any cut comes after $F=0.7$.

Risk levels move more than these comparisons. The pause judgment moves fitted productivity between \alHzSigmaMin{} and \alHzSigmaMax{} per month, and the twelve-segment customer-funded risk between \alHzCFMin{} and \alHzCFMax{} percent. Scaling the reference probability, with the pause's proportional cut in hazard held fixed, changes risk levels only. Hazard scales by a common factor, so productivity, the threshold, and the crossing budget, which depend on ratios of hazard, do not move.

\section{Uncertain technologies and state outcomes}\label{app:states}

\subsection{The commitment problem}

The regulator does not know how safety research works when it commits. The uncertain technologies change five features (Appendix~\ref{app:states}). They change safety productivity and the transfer of knowledge to stronger models. They also change whether safety compute scales with installed compute, whether it also advances capability, and whether danger continues after completion. The baseline and nine states that each change one feature are joined by three that combine adverse changes. The physical hazard functions $h$ and $k$ stay fixed, so full-speed risk differs across states. The states describe uncertainty about one physical frontier.

The commitment is chosen to minimize the worst regret. Let $\mathcal D$ be the constant-share grid. In technology $\theta$, let $\mathcal D_\theta(B)$ be the rules that meet the service budget and the deadline, and define the informed probability
\begin{equation}
P_\theta^\ast(B)=\min_{d\in\mathcal D_\theta(B)}P_\theta(d).
\label{eq:oracle}
\end{equation}
This is the lowest probability of catastrophe a regulator could reach if it knew the technology. The regulator commits to
\begin{equation}
\min_{d\in\bigcap_\theta\mathcal D_\theta(B)}
\max_\theta\{P_\theta(d)-P_\theta^\ast(B)\},
\label{eq:regret}
\end{equation}
the rule that minimizes the largest excess probability of catastrophe relative to an informed choice, with no prior over states. The rule must meet both commitments in every state. The informed benchmark uses the same grid and commitments.

\begin{table}[htbp]\centering\small
\caption{Fixed rules chosen before the safety technology is known}\label{tab:robust}
\setlength{\tabcolsep}{4pt}
\begin{tabular}{@{}rrrrrrr@{}}\toprule
Budget & Research & Safety & Worst & Largest regret & Customer-funded & Max delay\\
(\%) & (\%) & (\%) & regret & without spillover & worst regret & (months)\\\midrule
\addlinespace[3pt]\multicolumn{7}{@{}l}{\emph{A. Redeployed compute}}\\
5 & 50.0 & 14.5 & 5.90 & 0.00 & 5.90 & 0.00\\
15 & 49.5 & 30.5 & 6.52 & 0.28 & 6.69 & 0.06\\
30 & 36.5 & 41.5 & 7.78 & 0.17 & 10.65 & 1.81\\
\addlinespace[3pt]\multicolumn{7}{@{}l}{\emph{B. Saturated demand}}\\
5 & 41.0 & 14.0 & 4.86 & 0.15 & 8.92 & 1.13\\
15 & 29.5 & 27.0 & 7.37 & 0.10 & 13.37 & 3.09\\
30 & 21.5 & 41.0 & 8.36 & 0.17 & 23.58 & 5.06\\
\addlinespace[3pt]\multicolumn{7}{@{}l}{\emph{C. Reference reset}}\\
5 & 36.0 & 13.0 & 4.79 & 0.12 & 10.34 & 1.89\\
15 & 28.0 & 26.0 & 7.35 & 0.23 & 13.37 & 3.41\\
30 & 21.5 & 41.0 & 8.36 & 0.24 & 23.58 & 5.06\\
\bottomrule\end{tabular}
\par\smallskip\begin{minipage}{\textwidth}\footnotesize
Regret is in percentage points of catastrophe probability, relative to the best rule on the same grid chosen with knowledge of the technology. Budgets are shares of $\bar J$; service cost counts 60 months. Each rule must meet the budget and the 12-month deadline in all 13 technologies. The customer-funded column restricts the chosen rule to research at 50\% and keeps the same informed benchmark. Exhaustive share grid: spacing 0.005, research 2--50\%, safety 5--70\%, customers at least 1\%.
\end{minipage}\end{table}

\subsection{Knowledge laws of the technologies}

Each technology $\theta$ has its own knowledge law,
\begin{equation}
\dot S=\sigma f_\theta n_\theta e^{\zeta_\theta F}
\left[\psi+(1-\psi)\left(\frac{s}{s_0}\right)^\alpha
e^{\beta_\theta\gamma t}\right]-\delta_\theta\dot F S,
\label{eq:uncertainS}
\end{equation}
where $f_\theta$ scales safety productivity and $\zeta_\theta$ is the log of its rise across the takeoff. The normalization $n_\theta$ makes rising productivity, at reference shares, add the same knowledge by the endpoint as a flat one:
\[
n_\theta=\frac{T_0}{\int_0^1 e^{\zeta_\theta F}/u_0(F)\,dF}.
\]
A transfer coefficient $\delta_\theta=\log2$ leaves half of an initial stock's protection at the endpoint. Compute-scaled research uses $\beta_\theta=1/2$. Capability spillover replaces the research multiplier by
\begin{equation}
\left(\frac{x+\chi_\theta s}{x_0+\chi_\theta s_0}\right)^\varepsilon,
\qquad \chi_\theta=0.2\ \text{in spillover states},
\label{eq:spill}
\end{equation}
which leaves the reference path unchanged but lets extra safety compute speed capability. Spillover is the only feature that changes a rule's capability path, and so its service cost and delay.

Continuing-danger states add the post-completion hazard
\begin{equation}
H_{\mathrm{tail}}=\frac{\xi_\theta h(1)e^{-S(T)}}{\sigma f_\theta n_\theta e^{\zeta_\theta}},
\qquad \xi_\theta\in\{0.25,1\},
\end{equation}
with knowledge growing at the endpoint productivity and the reference safety share after completion. In the 0.25 case, state hazard runs at one quarter of its endpoint rate right after completion. Hazard primitives are not rescaled when the tail is added, so these states carry very high probabilities of catastrophe.

\begin{table}[htbp]\centering\footnotesize
\caption{Technology states with fixed hazard primitives: the 5\% rules}\label{tab:states}
\setlength{\tabcolsep}{4pt}
\begin{tabular}{@{}lrrrrrrr@{}}\toprule
& Full speed & \multicolumn{2}{c}{Redeployed compute} & \multicolumn{2}{c}{Saturated demand} & \multicolumn{2}{c}{Reference reset}\\
\cmidrule(lr){3-4}\cmidrule(lr){5-6}\cmidrule(l){7-8}
Technology & risk (\%) & Risk (\%) & Regret & Risk (\%) & Regret & Risk (\%) & Regret\\\midrule
Baseline & 75.00 & 58.05 & 0.00 & 55.32 & 0.10 & 54.58 & 0.00\\
Quarter productivity & 97.92 & 96.75 & 0.00 & 96.69 & 0.04 & 96.73 & 0.08\\
Half productivity & 93.54 & 88.38 & 0.00 & 87.62 & 0.08 & 87.49 & 0.08\\
Higher productivity & 51.11 & 30.71 & 0.00 & 27.91 & 0.06 & 27.11 & 0.00\\
Rising productivity & 77.93 & 62.49 & 0.00 & 60.21 & 0.15 & 59.70 & 0.06\\
Partial transfer & 87.37 & 77.08 & 0.00 & 75.45 & 0.12 & 75.10 & 0.07\\
Absolute safety compute & 59.22 & 35.23 & 0.00 & 30.97 & 0.00 & 29.41 & 0.00\\
Capability spillover & 75.00 & 58.68 & 5.44 & 56.13 & 4.55 & 55.46 & 4.48\\
Small continuing hazard & 95.60 & 84.68 & 0.00 & 81.93 & 0.05 & 81.04 & 0.00\\
Continuing hazard & 99.98 & 99.25 & 0.00 & 98.80 & 0.00 & 98.62 & 0.00\\
Weak productivity, transfer, spillover & 98.64 & 98.06 & 0.24 & 98.08 & 0.17 & 98.13 & 0.22\\
Half productivity, rise, transfer & 96.71 & 94.46 & 0.00 & 94.28 & 0.06 & 94.33 & 0.12\\
Compute scaling and spillover & 59.22 & 36.18 & 5.90 & 32.13 & 4.86 & 30.64 & 4.79\\
\bottomrule\end{tabular}
\par\smallskip\begin{minipage}{\textwidth}\footnotesize
Each case uses its own minimax-regret rule at the 5\% budget (redeployed compute, research 50.0\% and safety 14.5\%; saturated demand, research 41.0\% and safety 14.0\%; reference reset, research 36.0\% and safety 13.0\%). Regret is in percentage points. The individual variants change productivity to 0.25, 0.5, or 1.5 times baseline; raise productivity threefold; allow partial transfer; use absolute safety compute; add a 0.2 spillover; or retain 0.25 or 1 times the endpoint state hazard. The last three rows combine changes. No state is refitted to a common catastrophe probability.
\end{minipage}\end{table}

Probability regret compresses differences near one, so a hazard-regret criterion could rank rules differently in the tail states. Neither criterion is invariant to an arbitrary change in the hazard scale. Table~\ref{tab:states} reports the levels behind the regret calculation.

\section{Numerical construction and validation}\label{app:numerics}

Rules whose shares change at capability milestones can be integrated exactly. Let $m_j$ be the research multiplier in segment $j$. If segment $j$ begins at $(F_j,t_j)$, exact integration gives
\begin{equation}
e^{gt(F)}=e^{gt_j}
 +\frac{g}{A_0\rho m_j}(e^{-\rho F_j}-e^{-\rho F}),
\qquad F\in[F_j,F_{j+1}].
\end{equation}
So the date at which each capability level arrives has a closed form within every segment. Knowledge is integrated over capability with the integrating factor $e^{\delta_\theta F}$, and hazard as $(h/u+k)e^{-S}$. Discounted services are integrated until completion; the continuation flow is then added in closed form. Each segment has its own capability grid, and the node at each change of shares is duplicated so that no interval bridges the change.

Constant rules are optimized from six starting points and two-stage rules from eight, at each switch milestone in Table~\ref{tab:switch}. The best switch is then re-solved on 1601 capability nodes. In every case at least \alStartsAgree{} of 8 two-stage starts reach the same best rule. The finer mesh changes its probability of catastrophe by at most \alStageRefinement{} percentage points. Returned candidates are checked against the resource, budget, and deadline constraints with a small explicit service margin. Agreement across starts is evidence of numerical stability, not a proof of global optimality.

Six- and twelve-segment rules use the same formula on each segment, with the milestones listed under Table~\ref{tab:classes}. Each is optimized from three starting points, on 161 capability nodes per segment, and re-evaluated on 801 nodes per segment. At the 5 percent budget, one start for the rules that may also cut research maps the best two-stage rule onto the segments. In every case at least \alSegmentAgree{} of 3 starts reach the same rule within $10^{-6}$ in cumulative hazard. Customer-funded rules fix research at 50 percent and choose only the safety shares. Customer-funded two-stage rules are optimized at every switch milestone of Table~\ref{tab:switch}. A calendar-time solver with an event at every milestone reproduces each segment rule within $\alSegmentCheck$ in cumulative hazard.

The control mesh is checked separately from the integration mesh. Every twelve-segment rule is re-solved with 16 segments, refined near the start of the takeoff, and with 24 segments. Its probability of catastrophe changes by at most \alMeshRiskChange{} percentage points. The gain from allowing research cuts changes by at most \alMeshGainChange{} points, and the largest milestone delay by at most \alMeshDelayChange{} months. The milestones the text cites are the same on every mesh: where research is cut, where it returns to 50 percent, and where safety reaches its floor. Shares in the first segment change more, because the research cut grows deeper and shorter as that segment narrows. So the text reports when research is cut, not how deeply.

\begin{table}[htbp]\centering\small
\caption{Best two-stage rules by switch milestone}\label{tab:switch}
\begin{tabular}{@{}rrrr@{}}\toprule
Switch $F_c$ & Redeployed compute & Saturated demand & Reference reset\\\midrule
0.05 & 55.24 & 52.34 & 51.04\\
0.10 & 53.73 & 51.38 & 48.96\\
0.20 & 52.29 & 51.06 & 46.35\\
0.30 & 52.14 & 51.55 & 44.72\\
0.40 & 52.77 & 52.24 & 43.21\\
0.50 & 53.63 & 52.92 & 38.60\\
0.60 & 54.53 & 53.49 & 38.84\\
0.70 & 55.40 & 53.90 & 41.74\\
0.80 & 56.22 & 54.18 & 46.20\\
0.90 & 56.98 & 54.47 & 51.20\\
\bottomrule\end{tabular}
\par\smallskip\begin{minipage}{.72\textwidth}\footnotesize
Probability of catastrophe (\%) of the best two-stage rule at each switch milestone, under the 5\% budget and the 12-month deadline; 801 capability nodes. Table~\ref{tab:known} reports the best switch in each case, re-solved on 1601 nodes.
\end{minipage}\end{table}

Table~\ref{tab:capacity} repeats the known-technology comparison across demand capacity, with the continuation of Lemma~\ref{lem:continuation} and no continuing danger. The best constant rules there and in Figure~\ref{fig:source} use the same multistart routine; the figure's sweep uses 801 capability nodes.

\begin{table}[htbp]\centering\small
\caption{Best rules across demand capacity}\label{tab:capacity}
\setlength{\tabcolsep}{4pt}
\begin{tabular}{@{}rrrrrrrr@{}}\toprule
Demand & \multicolumn{3}{c}{Best constant rule} & \multicolumn{4}{c}{Best two-stage rule}\\
\cmidrule(lr){2-4}\cmidrule(l){5-8}
capacity (\%) & Research & Safety & Risk & Research & Safety & Risk & Switch\\\midrule
45 & 40.7 & 14.3 & 54.83 & 44.6/47.2 & 27.2/7.8 & 51.06 & 0.2\\
55 & 45.4 & 13.9 & 57.22 & 47.5/40.0 & 26.7/5.0 & 50.56 & 0.3\\
65 & 50.0 & 14.8 & 57.68 & 50.0/42.7 & 26.4/5.0 & 51.78 & 0.3\\
75 & 50.0 & 14.8 & 57.68 & 50.0/49.9 & 25.8/5.0 & 52.14 & 0.3\\
85 & 50.0 & 14.8 & 57.68 & 50.0/50.0 & 25.8/5.0 & 52.14 & 0.3\\
100 & 50.0 & 14.8 & 57.68 & 50.0/50.0 & 25.8/5.0 & 52.14 & 0.3\\
\bottomrule\end{tabular}
\par\smallskip\begin{minipage}{.9\textwidth}\footnotesize
Shares and the probability of catastrophe in percent, under the 5\% budget and the 12-month deadline, with the continuation of Lemma~\ref{lem:continuation} and no continuing danger. Customer-funded safety gives 57.68\% at every capacity. Two-stage entries show shares before/after the switch milestone, chosen from the grid of Table~\ref{tab:switch} and re-solved on 1601 nodes.
\end{minipage}\end{table}

The uncertainty exercise searches an exhaustive grid. Research shares run from 0.02 to 0.50 and safety shares from 0.05 to 0.70 in steps of 0.005, with customers at least 0.01. That gives \alGridCount{} rules, each evaluated in all \alStateCount{} technologies and three continuations on 1601 capability nodes. Re-evaluating every rule on 401 and 801 nodes changes no chosen rule; the largest change in any probability is \alIntegrationGap{} percentage points.

Independent methods check the integrator. Adaptive quadrature over capability reproduces the constant rules' hazard. A calendar-time solver with events at the switch and the endpoint reproduces the two-stage paths. Both agree within $\alPathCheck$ in cumulative hazard. The derivatives in Propositions~\ref{prop:funding} and~\ref{prop:source} agree with centered differences of independently integrated hazard and service cost within a relative error of $\alDerivativeCheck$. They also agree under the gradual continuations of Corollary~\ref{cor:gradual}. The square-root allocation \eqref{eq:shares} agrees with a constrained numerical maximization within $\alSqrtGap$ in shares. These checks test paths, derivatives, and optimization; they do not validate the assumed hazard or safety productivity.

\setlength{\bibsep}{2pt plus 1pt}
\section{Notation}\label{app:notation}

\begin{table}[htbp]\centering\small
\caption{Notation, in order of appearance}\label{tab:notation}
\begin{tabular}{@{}>{$}l<{$}>{\raggedright\arraybackslash}p{12.0cm}@{}}
\toprule
\multicolumn{1}{@{}l}{Symbol} & Meaning\\
\midrule
F,\ T & Capability, from 0 to the endpoint 1; completion date\\
K(t)=e^{\gamma t} & Installed compute\\
x,\ s,\ i & Shares for capability research, safety research, and customer inference\\
x_0,\ s_0,\ i_0 & Reference (full-speed) shares\\
A_0,\ \varepsilon,\ g=\varepsilon\gamma,\ \rho & Research productivity, research elasticity, growth of research capacity, recursive improvement\\
t_0(F),\ t_x(F),\ u_x(F) & Arrival date on the reference path and at constant research share $x$; pace at $F$\\
S,\ \sigma(F),\ y(s),\ \psi,\ \alpha & Knowledge (log-hazard reduction); safety productivity; learning multiplier, its compute-free part, curvature\\
a(s)=\sigma y(s) & Knowledge per month at constant safety share $s$\\
h(F),\ k(F),\ \lambda,\ H,\ P & State and step hazard; hazard rate; cumulative hazard; probability of catastrophe\\
\mu,\ r,\ \kappa=\gamma/2-r & Quality elasticity; monthly discount rate; growth of the discounted service flow\\
\bar\imath,\ i_c,\ \xi & Demand capacity; customer share after completion; continuing danger\\
b,\ J_\tau,\ C_\tau,\ \bar J,\ B & Service flow; discounted services and service cost over $\tau$; budget unit; service budget\\
L & Largest allowed milestone delay\\
m,\ z(F),\ \ell(F),\ \vartheta(F) & $(x/x_0)^\varepsilon$; $e^{gt_0(F)}-1$; delay per unit of research share; pace elasticity\\
H_x,\ H_s & Hazard effects of the research and safety shares (Proposition~\ref{prop:funding})\\
J^{\mathrm{pre}},\ b_T,\ M(T),\ \Phi & Services before completion; service flow just before completion; value of finishing; hazard remaining after completion\\
c_C,\ c_R,\ Q_x & Service cost of a unit of safety share when customers or research pay; quality part of $c_R$\\
R_C,\ R_R,\ \pi^\ast & Hazard reductions per unit share when customers or research pay; threshold finishing premium\\
\bar\imath^\ast,\ \rho,\ \theta_C,\ \theta_Q,\ E & Demand-capacity threshold; $R_R/R_C$; coefficients of Proposition~\ref{prop:budget}\\
\nu,\ q_F,\ q_S,\ A(t),\ W(t) & Budget multiplier; costates; hazard still ahead; net value of a capability gain\\
b_R,\ b_S,\ b_I & Marginal values of research, safety, and customers in Proposition~\ref{prop:allocation}\\
\zeta & Growth of safety productivity with capability, $\sigma(F)=\sigma_0e^{\zeta F}$\\
\bottomrule
\end{tabular}
\end{table}

\end{document}